\newtheorem{theorem}{Theorem}[section]
\newtheorem{lemma}[theorem]{Lemma}
\newtheorem{corollary}[theorem]{Corollary}
\theoremstyle{definition}
\newtheorem{definition}[theorem]{Definition}
\theoremstyle{remark}
\newcommand{\etal}{\textit{et al}. }
\def\defeq{\stackrel{\mathrm{def}}{=}}
\newcommand{\indeg}{\boldsymbol{d}_{\textsf{in}}}
\begin{document}
\sloppy

\title{Near-Optimal Fully Dynamic Densest Subgraph}

\date{}

\author{
Saurabh Sawlani\\
Georgia Tech\\
\texttt{sawlani@gatech.edu}
\and
Junxing Wang\\
CMU\\
\texttt{junxingw@cs.cmu.edu}
}
\maketitle


\begin{abstract}
We give the first fully dynamic algorithm which maintains a $(1-\epsilon)$-approximate densest subgraph in worst-case time $\text{poly}(\log n, \epsilon^{-1})$ per update. Dense subgraph discovery is an important primitive for many real-world applications such as community detection, link spam detection, distance query indexing, and computational biology. We approach the densest subgraph problem by framing its dual as a graph orientation problem, which we solve using an augmenting path-like adjustment technique. Our result improves upon the previous best approximation factor of $(\nicefrac{1}{4} - \epsilon)$ for fully dynamic densest subgraph [Bhattacharya \emph{et. al.}, STOC `15].
We also extend our techniques to solving the problem on vertex-weighted graphs
with similar runtimes.

Additionally, we reduce the $(1-\epsilon)$-approximate densest subgraph problem on directed graphs to $O(\log n/\epsilon)$ instances of $(1-\epsilon)$-approximate densest subgraph on vertex-weighted graphs.
This reduction, together with our algorithm for vertex-weighted graphs, gives the first fully-dynamic algorithm for directed densest subgraph in worst-case time $\text{poly}(\log n, \epsilon^{-1})$ per update.
Moreover, combined with a near-linear time algorithm for densest subgraph [Bahmani \emph{et. al.}, WAW `14], this gives the first near-linear time algorithm for directed densest subgraph.
\end{abstract}

\pagenumbering{gobble}

\vfill

\pagebreak

\pagenumbering{arabic}


\section{Introduction} \label{sec:intro}

A majority of real-world networks are very large in size,
and a significant fraction of them are known to change rather rapidly \cite{SahuMSLO17}.
This has necessitated the study of efficient dynamic graph algorithms - 
algorithms which use the existing solution to quickly find
an updated solution for the new graph.
Due to the size of these graphs, it is imperative
that each update be processed in sub-linear time.

Data structures which efficiently maintain solutions to
combinatorial optimization problems have shot into prominence
over the last few decades \cite{SleatorT83, Frederickson85}.
Many fundamental graph problems such as graph connectivity
\cite{HenzingerK99, HolmLT01, KapronKM13},
maximal and maximum matchings \cite{GuptaP13, BhattacharyaHI18, BernsteinS16, BhattacharyaHN16, BhattacharyaHN17},
maximum flows and minimum cuts \cite{ItalianoS10, Thorup07, GoranciHS18}
have been shown to have efficient dynamic algorithms which only require sub-linear
runtime per update.
On the other hand, lower bounds exist for the update times for a number of these problems
\cite{AbboudW14, HenzingerKNS15, AbboudD16, AbboudGIKPTUW19, AbboudKT20}.
\cite{Henzinger18} contains a comprehensive survey of many graph problems and their state-of-the-art dynamic algorithms.

In this paper, we consider the \emph{densest subgraph problem}.
Given an undirected graph $G = \langle V,E \rangle$,
the \emph{density} of a subgraph induced by $S \subseteq V$ 
is defined as $\rho_G(S) = |E(S)|/|S|$,
where $E(S)$ is the set of all edges within $S$.
The densest subgraph problem (DSP) asks to find a a set $S \in V$ such that 
\[
\rho_G^* \defeq \rho_G(S) = \max_{U \subseteq V} \rho_G(U).
\]
We call $\rho_G^*$ the maximum subgraph density of $G$.

The densest subgraph problem has great theoretical relevance due to its
close connection to fundamental graph problems such as \emph{network flow}
and \emph{bipartite matching}\footnote{We describe this connection
explicitly in Sections~\ref{sec:prelims} and~\ref{subsec:intuition}.}.
While near-linear time algorithms exist for finding matchings
in graphs \cite{MicaliV80, GabowT91, DuanP14},
the same cannot be said for flows on directed graphs \cite{Madry11}.
In this sense, DSP acts as an indicative middle ground,
since it is both a specific instance of a flow problem
\cite{Goldberg84, BahmaniGM14},
as well as a generalization of bipartite $b$-matchings.
Interestingly, DSP does allow near-linear time
algorithms \cite{BahmaniGM14}.

In terms of dynamic algorithms, the state-of-the-art data structure
for maintaining $(1+\epsilon)$-approximate maximum matchings
takes $O(\sqrt{m} \epsilon^{-2})$ time per update \cite{GuptaP13}.
\cite{BhattacharyaHI18:bmatching} maintain a constant factor approximation
to the $b$-matching problem in $O(\log^3 n)$ time.
For flow-problems, algorithms which
maintain a constant factor approximation in sublinear update time
have proved to be elusive.

In addition to its theoretical importance,
dense subgraph discovery is an important primitive
for several real-world applications such as community detection~\cite{KumarRRT99, Newman06, KumarNT06, DourisboureGP07, ChenS12},
link spam detection~\cite{GibsonKT05}, story identification~\cite{AngelKSSST14},
distance query indexing~\cite{CohenHKZ03, JinXRF09, AkibaIY13} and computational biology \cite{HuYHHZ05, SahaHKRZ10, RenWLW13}, to name a few.
Due to its practical relevance, many related notions of subgraph density,
such as \emph{$k$-cores} \cite{Seidman83}, \emph{quasi-cliques} \cite{BrunatoHB08},
\emph{$\alpha$-$\beta$-communities} \cite{MishraSST08}
have been studied in the literature.
\cite{LeeRJA10, TangL10, Tsourakakis14} contain several other applications of dense subgraphs and related problems.

\subsection{Background and related work}

As defined in \cite{BhattacharyaHNT15},
we say that an algorithm is a \emph{fully dynamic $\gamma$-approximation algorithm}
for the densest subgraph problem if it can process the following operations:
(i) insert/delete an edge into/from the graph;
(ii) query a value which is at least $\gamma$ times the maximum subgraph density of the graph.

Goldberg \cite{Goldberg84} gave the first polynomial-time
algorithm to solve the densest subgraph problem
by reducing it to $O(\log n)$ instances of maximum flow.
This was subsequently improved to use only $O(1)$
instances, using parametric max-flow \cite{GalloGT89}.
Charikar \cite{Charikar00} gave an exact linear programming
formulation of the problem, while at the same time
giving a simple greedy algorithm which gives
a $\nicefrac{1}{2}$-approximate densest subgraph (first studied in \cite{AsahiroITT00}).
Despite the approximation factor,
this algorithm is popular in practice \cite{CohenHKZ03} due to its
simplicity, its efficacy on real-world graphs, and due to the fact that
it runs in linear time and space.

Obtaining fast algorithms for approximation
factors better than $\nicefrac{1}{2}$, however, has proved to be a
harder task.
One approach towards this is to sparsify the graph
in a way that maintains subgraph densities \cite{McGregorTVV15, MitzenmacherPPTX15}
within a factor of $1-\epsilon$,
and run the exact algorithm on the sparsifier.
However, this algorithm still incurs a term
of $n^{1.5}$ in the running time,
causing it to be super-linear
for sparse graphs.
A second approach is via numerical
methods to solve positive LPs\footnote{A positive linear program is one in which all coefficients, variables and constraints are non-negative. They are alternatively known as Mixed Packing and Covering LPs.} approximately.
Bahmani \etal \cite{BahmaniGM14} gave
a $O(m \log n \cdot \epsilon^{-2})$ algorithm
by bounding the width of the dual LP
for this problem,
and using the multiplicative weights update
framework \cite{PlotkinST95, AroraHK12} to find an $(1-\epsilon)$-approximate solution.
Su and Vu \cite{SuV19} used a similar technique to obtain an efficient distributed $(1-\epsilon)$-approximation algorithm.
Alternately, using accelerated methods to solve positive LPs \cite{BoobSW19}
gives a $\widetilde{O}(m \Delta \epsilon^{-1})$ algorithm\footnote{$\widetilde{O}$ hides polylogarithmic factors in $n$.},
where $\Delta$ is the maximum degree in the input graph.

In terms of dynamic and streaming algorithms
for the densest subgraph problem,
the first result is by Bahmani \etal \cite{BahmaniKV12},
where they modified Charikar's greedy algorithm
to give a $(\nicefrac{1}{2} - \epsilon)$-approximation
using $O(\log_{1+\epsilon}n)$ passes over the input.
Das Sarma \etal \cite{SarmaLNT12} adapted this idea to maintain a $(\nicefrac{1}{2} - \epsilon)$ approximate densest subgraph efficiently in the distributed CONGEST model.
Using the same techniques as in the static case,
Bahmani \etal \cite{BahmaniGM14} obtained a $(1-\epsilon)$-approximation
algorithm that requires $O(\log n \epsilon^{-2})$
passes over the input.

Subsequently, Bhattacharya \etal \cite{BhattacharyaHNT15}
developed a more nuanced
data structure to enable a 1-pass streaming algorithm
which finds a $(\nicefrac{1}{2} - \epsilon)$ approximation.
They also gave the first dynamic algorithm
for DSP - a fully dynamic $(\nicefrac{1}{4} - \epsilon)$ approximation algorithm
using amortized time $O(\text{poly}(\log n, \epsilon^{-1}))$
per update.
Around the same time, Epasto \etal \cite{EpastoLS15}
gave a fully dynamic $(\nicefrac{1}{2} - \epsilon)$-approximation algorithm
for DSP in amortized time $O(\log^2 n \epsilon^{-2})$ per update,
with the caveat that edge deletions can only be random.

Kannan and Vinay \cite{KannanV99} defined a notion of density
on directed graphs, and subsequently gave a $O(\log n)$
approximation algorithm for the problem.
Charikar \cite{Charikar00} gave a polynomial-time algorithm
for directed DSP by reducing the problem to solving
$O(n^2)$ LPs. On the other hand, Khuller and Saha \cite{KhullerS09}
used parametrized maximum flow to derive a polynomial-time algorithm.
In the same paper, the gave a linear time 2-approximation algorithm
for the problem.



\subsection{Our results}

We use a ``dual" interpretation of the densest subgraph problem
to gain insight on the optimality conditions, as in \cite{Charikar00, BahmaniGM14}.
Specifically, we translate it into a problem of assigning edge loads
to incident vertices so as to minimize the maximum load across vertices.
Viewed another way,
we want to orient edges in a directed graph so as to minimize
the maximum in-degree of the graph.
This view gives a local condition for near-optimality of the algorithm,
which we then leverage to design a data structure
to handle updates efficiently.
As our primary result, we give the first fully dynamic $(1-\epsilon)$-approximation
algorithm for DSP which runs in $O(\text{poly}(\log n, \epsilon^{-1}))$ worst-case time per update:

\begin{restatable}{theorem}{dynamic} \label{thm:dynamic}
	Given a graph $G$ with $n$ vertices,
	there exists a deterministic fully dynamic $(1+\epsilon)$-approximation algorithm
	for the densest subgraph problem using $O(1)$ worst-case time per query
	and 
	$O(\log^4 n \cdot \epsilon^{-6} )$ worst-case time per edge insertion or deletion.

	Moreover, at any point, the algorithm can output the corresponding
	approximate densest subgraph in time $O(\beta + \log n)$,
	where $\beta$ is the number of vertices in the output.
\end{restatable}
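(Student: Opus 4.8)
The plan is to attack the problem through its linear-programming dual, which --- as is standard, and as the preliminaries develop --- reformulates the maximum subgraph density $\rho_G^*$ as the minimum, over all \emph{fractional orientations} of $G$ (assignments splitting each edge's unit of load between its two endpoints), of the maximum total load carried by a vertex. I would first discretize: fix a parameter $\delta = \Theta(\epsilon)$ and only allow loads that are integer multiples of $\delta$, equivalently orient integrally the multigraph obtained by replacing each edge with $\Theta(1/\delta)$ parallel copies of weight $\delta$. Rounding to this granularity costs an additive $O(\delta)$ in the maximum load, which is absorbed into a $(1+\epsilon)$ factor as long as $\rho_G^* = \Omega(1)$; the bounded-density regime is handled by running the scheme at $O(\log n)$ geometric scales so that the discretization is always fine relative to the current density (and the forest case $\rho_G^* < 1$ directly). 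The maintained object is a discrete orientation together with a pointer to a heaviest vertex, so the query just returns the rescaled maximum in-degree in $O(1)$ time.

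The structural heart is a local-to-global lemma: if the current orientation admits no \emph{improving augmenting path} of length at most $\ell \defeq \Theta(\epsilon^{-1}\log n)$ --- a directed path that starts at a vertex of (near-)maximum in-degree $d$, ends at a vertex whose in-degree is below $d$ by at least one unit, and all of whose edges may be reversed --- then $d \le (1+O(\epsilon))\,\rho_G^*$. I would prove this by ball-growing: let $B_0$ be the set of (near-)heaviest vertices and $B_{i+1}$ the vertices reachable from $B_i$ by reversing one additional edge. The absence of a short improving path forces every vertex of each $B_i$ (with $i \le \ell$) to still have in-degree $\Omega(d)$ and every in-edge of $B_i$ to originate inside $B_{i+1}$; hence $|E(B_{i+1})| = \Omega(d\,|B_i|)$ and $\rho_G^* \ge \rho_G(B_{i+1}) = \Omega(d\,|B_i|/|B_{i+1}|)$. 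Either some consecutive ratio $|B_{i+1}|/|B_i|$ is at most $1+\epsilon$, which gives the bound, or the balls grow geometrically and exceed $n$ within $\ell$ steps, a contradiction. The same argument identifies the ball $B_{i+1}$ to return as the approximate densest subgraph, and, maintaining enough of the layered structure, it can be read off in time $O(\beta + \log n)$, where $\beta$ is its size and the $O(\log n)$ accounts for locating the heaviest vertex and the active scale.

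The data structure then maintains the invariant ``no improving augmenting path of length $\le \ell$.'' On an edge insertion I would orient the new edge (all its parallel copies) toward the currently lighter endpoint, increasing one vertex's load by one unit, and, if this creates a violation, run a bounded search for $\le \ell$ steps in the residual (reversal) graph to locate a single improving path and reverse it; because reversing a path changes only the in-degrees of its two endpoints, and its far endpoint was strictly below the threshold, one reversal restores the invariant. Deletions are symmetric: removing an edge lightens a vertex and may expose an improving path out of some heavy vertex, repaired in the same way. Each update thus performs $O(\ell)$ augmentation work; combining this with the $\Theta(1/\delta)$ blow-up factor, the $O(\log n)$ scales, and the cost of the bucketed load/adjacency structures, careful bookkeeping yields the claimed $O(\log^4 n\,\epsilon^{-6})$ worst-case time, with no global rebuilds, and the whole scheme is deterministic.

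The step I expect to be the main obstacle is establishing that the repair after an update is genuinely \emph{local}: that a single augmenting-path reversal (together with the single unit-load change that triggered it) cannot set off a chain of new violations, and that the search for that path can be confined to $O(\ell)$ vertices rather than wandering through a large residual neighborhood. This forces a careful choice of the precise invariant --- phrased, for instance, as local lexicographic minimality of the sorted load vector --- so that it is provably restored by exactly one canonical reversal found by a bounded, directed search; a secondary difficulty is making the bound worst-case rather than amortized, which requires charging the per-update work vertex-by-vertex along the path and spreading any change in the density estimate over future updates instead of rebuilding.
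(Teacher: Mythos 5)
Your overall architecture matches the paper's: the dual/orientation view, discretization by edge duplication, a ball-growing (threshold-set) argument that local stability implies a $(1-O(\epsilon))$-dense level set, repair by reversing a directed chain, and $O(\log n)$ geometric density scales with an active copy. The local-to-global lemma you sketch is essentially Theorem~\ref{thm:lp-approx} and would go through. The gap is exactly the point you flag at the end and then leave unresolved, and it is not a technicality --- it is the main content of the dynamic result. Your invariant is \emph{path-based} (``no improving augmenting path of length $\le \ell=\Theta(\epsilon^{-1}\log n)$ with a 1-unit drop at the far end''), and for such an invariant neither of your two key claims is substantiated: (i) the ``bounded search for $\le \ell$ steps'' does not bound the work, since the residual out-degree at each step can be $\Theta(\hat\rho_G)$ or even $\Theta(n)$, so a depth-$\ell$ search for a path to a \emph{specific target class} (vertices one unit lighter) can touch far more than $\mathrm{poly}(\log n,\epsilon^{-1})$ vertices; and (ii) ``one reversal restores the invariant'' is false as stated for a path invariant: reversing a path creates new residual arcs and can create new improving paths from other near-maximum vertices, and nothing in ``local lexicographic minimality'' is argued to prevent a cascade. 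There is also a third unaddressed cost: even testing whether an in-neighbor is one unit lighter requires degree information, and a vertex cannot afford to broadcast each degree change to its (possibly $\Omega(n)$) out-neighbors.

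The paper's resolution is to change the invariant, not the search. It keeps only the \emph{edge-local} condition $d(v)\le d(u)+\eta$ for every oriented edge $\overrightarrow{uv}$, with slack $\eta=\Theta(\epsilon^{2}\rho^{\textsf{est}}/\log n)$ after duplicating each edge $\alpha=\Theta(\log n\,\epsilon^{-2})$ times (your slack is a fixed single unit, independent of the density, which is what forces long targeted augmenting paths). With density-proportional slack, the repair is a greedy walk along \emph{tight} edges (gap at least $\eta/2$): any tight in-neighbor will do, no target is needed, the maximal chain has length at most $2\hat\rho_G/\eta=O(\alpha)$, and flipping the whole chain moves the unit increment to a vertex with no tight in-edge, which provably preserves the per-edge invariant --- this is where the one-chain-flip argument is actually valid, precisely because the invariant is per-edge rather than per-path. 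The remaining per-step cost is controlled by a hysteresis ($\eta/2$ for ``tight'' versus $\eta$ for ``violated'') combined with lazy round-robin informing, whereby a vertex updates only $O(\indeg(v)/\eta)=O(\alpha)$ in-neighbors per degree change, and Lemma~\ref{lem:invariant} shows the stale information still maintains the invariant. Without these ingredients (or a worked-out substitute for them), your proposal does not support the claimed worst-case $O(\log^4 n\cdot\epsilon^{-6})$ update time; the approximation analysis alone is the easy half.
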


Charikar \cite{Charikar00} gave a reduction
from the densest subgraph problem on directed graphs
to solving a number of instances of an LP.
We visualize this LP as DSP on a vertex-weighted graph.
We show that our approach on unweighted graphs extends
naturally to those with vertex weights,
thereby also giving a fully dynamic $(1-\epsilon)$-approximation
algorithm for directed DSP which runs in $O(\text{poly}(\log n, \epsilon^{-1}))$ worst-case time per update:

\begin{restatable}{theorem}{directeddynamic} \label{thm:directeddynamic}
	Given a directed graph $G$ with $n$ vertices,
	there exists a deterministic fully dynamic $(1-\epsilon)$-approximation algorithm for the densest subgraph problem on $G$ using $O(\log n/\epsilon)$ worst-case query time and worst-case update times of
	$O(\log^5 n \cdot \epsilon^{-7} )$ per edge insertion or deletion.
	
	Moreover, at any point, the algorithm can output the corresponding
	approximate densest subgraph in time $O(\beta + \log n)$,
	where $\beta$ is the number of vertices in the output.
\end{restatable}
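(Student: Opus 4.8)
The plan is to prove Theorem~\ref{thm:directeddynamic} by reduction to Theorem~\ref{thm:dynamic}, or more precisely to the vertex-weighted version of Theorem~\ref{thm:dynamic} that the text promises. First I would recall Charikar's LP-based characterization of directed densest subgraph: for a directed graph $G$, one seeks disjoint-ish sets $S, T \subseteq V$ maximizing $|E(S,T)| / \sqrt{|S| \cdot |T|}$, and Charikar shows this is captured by a family of LPs parametrized by the ratio $c = |T|/|S|$. I would make explicit that each such LP, for a fixed guess of $c$, is exactly the LP relaxation of a \emph{vertex-weighted} densest subgraph instance: build a bipartite-like graph where the ``source side'' vertices get weight proportional to $1/\sqrt{c}$ and the ``sink side'' vertices get weight proportional to $\sqrt{c}$ (or some symmetric normalization), and an undirected edge is placed for each directed edge $(u,v)$ of $G$ between the source-copy of $u$ and the sink-copy of $v$. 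The maximum weighted density of this instance then equals (a simple function of) the directed density at ratio $c$.

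Next I would handle the discretization of the ratio parameter. Since $|S|, |T| \in \{1, \dots, n\}$, the ratio $c$ lies in $[1/n, n]$; taking geometric steps $c_i = (1+\epsilon)^i$ gives $O(\log n / \epsilon)$ candidate values, and standard arguments (as in Charikar, also used in \cite{KhullerS09}) show that rounding the true optimal ratio to the nearest $c_i$ costs only a $(1-O(\epsilon))$ factor in the density. So I would maintain, in parallel, $O(\log n / \epsilon)$ independent copies of the fully dynamic vertex-weighted $(1-\epsilon)$-approximate DSP data structure, one for each $c_i$; an edge insertion/deletion in $G$ translates to exactly one edge insertion/deletion in each copy (the source-copy/sink-copy endpoints are determined by the edge direction), so the update cost is $O(\log n / \epsilon)$ times the per-update cost of the vertex-weighted algorithm. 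To answer a density query I would query all $O(\log n/\epsilon)$ copies, convert each returned weighted density back to a directed-density estimate, and return the maximum — giving the stated $O(\log n / \epsilon)$ query time. To output the subgraph, I would take the best copy, ask its data structure for the approximate densest subgraph in that weighted instance in time $O(\beta + \log n)$, and read off $S$ and $T$ from which original vertices appear as source-copies and sink-copies.

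For the runtime bookkeeping, I would invoke the vertex-weighted analogue of Theorem~\ref{thm:dynamic}, which should give $O(\log^4 n \cdot \epsilon^{-6})$ worst-case update time (possibly with the weight ratio $n$ contributing a log factor, since vertex weights in these instances are bounded by $\mathrm{poly}(n)$); multiplying by the $O(\log n / \epsilon)$ parallel copies yields $O(\log^5 n \cdot \epsilon^{-7})$ per update, matching the statement. I would also need to re-derive $\epsilon$: running each copy at approximation $(1-\epsilon')$ with $\epsilon' = \Theta(\epsilon)$ and discretizing the ratio at granularity $\Theta(\epsilon)$ composes to an overall $(1-\epsilon)$ guarantee, which only changes constants.

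The main obstacle I anticipate is making the reduction from the directed problem to the vertex-weighted problem fully rigorous on \emph{both} sides of the approximation: it is easy to see that a good weighted-density subgraph yields a comparably good directed subgraph, but one must also argue the converse — that for the correctly guessed ratio $c_i$, the optimal directed solution $(S^*, T^*)$ induces a weighted subgraph whose density is within $(1-\epsilon)$ of the directed optimum — and that this survives the LP relaxation/integrality gap that Charikar's formulation introduces. Tied to this is correctly propagating the fact that the dynamic vertex-weighted data structure only returns an \emph{approximate} densest subgraph, so the per-copy error and the ratio-discretization error must be shown to compose multiplicatively rather than additively. Once that chain of $(1-\epsilon)$ factors is pinned down, the dynamic part is a routine ``run $O(\log n/\epsilon)$ copies in parallel'' argument.
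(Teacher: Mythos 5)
Your proposal follows essentially the same route as the paper: the paper reduces directed DSP to the vertex-weighted instances $G_t$ (left copies weighted $1/2t$, right copies $t/2$), shows $\rho_{G_t}^* \leq \rho_G^*$ for every $t$ and $\rho_{G_t}^* \geq (1-\epsilon)\rho_G^*$ whenever $t$ is within a $(1-\epsilon)$ factor of $\sqrt{|S|/|T|}$ (Lemmas~\ref{lem:lowerbnd}--\ref{lem:simpleupperbound}), and then runs $O(\log n/\epsilon)$ parallel copies of the dynamic vertex-weighted structure of Theorem~\ref{thm:vertexweighted} over a geometric grid of $t\in[1/\sqrt{n},\sqrt{n}]$, composing the two $\Theta(\epsilon)$ losses and using $W\leq n$ so that $\log(nW)=O(\log n)$, exactly as you outline. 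The one obstacle you anticipate --- an LP relaxation/integrality-gap issue in the converse direction --- does not actually arise, since the paper's reduction is purely combinatorial: both directions are direct density comparisons on the sets $S^{(L)}\cup T^{(R)}$ via the AM--GM inequality, with no LP in the argument.
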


\subsection{Organization}

In Section~\ref{sec:prelims}, we define essential notation, and formulate DSP as a linear program.
In Section~\ref{sec:dynamic}, we give our primary result - a fully dynamic $1+\epsilon$ approximation algorithm for DSP with
updates in worst-case time $\text{polylog}(n,\epsilon^{-1})$.
In Section~\ref{sec:vertexweighted}, we extend our
results from Section~\ref{sec:dynamic} to vertex-weighted graphs.
In Section~\ref{sec:directed}, we give a detailed reduction from
directed DSP to undirected vertex-weighted DSP.


\section{Preliminaries} \label{sec:prelims}

We represent any undirected graph $G$ as $G = \langle V, E \rangle$,
where $V$ is the set of vertices in $G$,
$E$ is the set of edges in $G$.
For any subset of vertices $S \subseteq V$,
we denote using $E(S)$ the subset of all edges within $S$.

We define $\rho_G(S)$ as the density of subgraph induced by $S$ in $G$, i.e.,
\[
{\rho_G(S)} \defeq \frac{|E(S)|}{|S|}.
\]
The maximum subgraph density of $G$, $\rho^*_G$, is simply
the maximum among all subgraph densities, i.e.,
\[
\rho _G^* \defeq \mathop {\max }\limits_{S \subseteq V} {\rho_G(S)}.
\]

\subsection{LP formulation and dual}
The following is a well-known LP formulation of the densest subgraph problem,
introduced in \cite{Charikar00}.
Associate each vertex $v$ with a variable $x_v \in \{0,1\}$,
where $x_v = 1$ signifies $v$ being included in $S$.
Similarly, for each edge, let $y_e \in \{0,1\}$ denote whether or not it is
in $E(S)$.
Relaxing the variables to be real numbers,
we get the following LP, which we denote by $\textsc{Primal}(G)$,
whose optimal is known to be $\rho_G^*$.
\begin{center}
	\begin{tcolorbox}[width=0.6\linewidth]
		\begin{center}
			$\textsc{Primal}(G)$
		\end{center}
		\vspace*{-5mm}
		\begin{equation*}
		\begin{array}{ll@{}ll}
		\text{maximize}  & \displaystyle\sum\limits_{e \in E} & y_e &\\
		\text{subject to} & & y_e \leq x_u,x_v,        & \forall e=uv \in E\\
		& \displaystyle\sum\limits_{v \in V} & x_v \leq 1,\\
		&                  & y_e \geq 0, x_v \geq 0,         & \forall e \in E, \forall v \in V
		\end{array}
		\end{equation*}
	\end{tcolorbox}
\end{center}

As in \cite{BahmaniGM14, SuV19}, we take greater interest in the dual of the above problem.
Let $f_{e}(u)$ be the dual variable associated with the first $2m$
constraints of the form $y_e \leq x_u$ in $\textsc{Primal}(G)$,
and let $D$ be associated with the last constraint.
We get the following LP, which we denote by $\textsc{Dual}(G)$.
\begin{center}
	\begin{tcolorbox}[width=0.7\linewidth]
		\begin{center}
			$\textsc{Dual}(G)$
		\end{center}
		\vspace*{-5mm}
		\begin{equation*}
		\begin{array}{lr@{}ll}
		\text{minimize}  & & D &\\
		\text{subject to} & f_e(u) + & f_e(v) \geq 1, \qquad & \forall e=uv \in E\\
		&\displaystyle\sum\limits_{e \ni v} & f_e(v) \leq D, & \forall v \in V\\
		&                  & f_e(u) \geq 0, f_e(v) \geq 0,        & \forall e=uv \in E
		\end{array}
		\end{equation*}
	\end{tcolorbox}
\end{center}
This LP can be visualized as follows.
Each edge $e=uv$ has a load of $1$,
which it wants to assign to its end points: $f_e(u)$ and $f_e(v)$
such that the total load on each vertex is at most $D$.
The objective is to find the minimum $D$ for which such a load
assignment is feasible.

For a fixed $D$, the above formulation resembles
a bipartite graph between edges and vertices.
Then, the problem is similar to a bipartite $b$-matching problem \cite{BhattacharyaHI18:bmatching},
where the demands on one side are at most $D$,
and the other side are at least $1$.

From strong duality, we know that the optimal objective values
of both linear programs are equal, i.e., exactly $\rho_G^*$.
Let $\rho_G$ be the objective of any feasible solution to $\textsc{Primal}(G)$.
Similarly, let $\hat \rho_G$ be the objective of any feasible solution to $\textsc{Dual}(G)$.
Then, by optimality of $\rho_G^*$ and weak duality,
\begin{equation} \label{eqn:duality}
\rho_G \leq \rho_G^* \leq \hat\rho_G.
\end{equation}

\section{Fully Dynamic Algorithm} \label{sec:dynamic}

In this section, we describe the main result of the paper: a deterministic fully-dynamic
algorithm which maintains a $(1-\epsilon)$-approximation to the densest subgraph problem in $\text{poly}(\log n, \epsilon^{-1})$ worst-case time per update.

\subsection{Intuition and overview} \label{subsec:intuition}
At a high level, our approach is to view the densest subgraph problem
via its dual problem, i.e., ``assigning" each edge
fractionally to its endpoints (as we discuss in Section~\ref{sec:prelims}).
We view this as a load distribution problem,
where each vertex is assigned some load from its incident edges.
Then, the objective of the problem is simply to find an assignment
such that the maximum vertex load is minimized.
It is easy to verify that an optimal load assignment in the dual problem
is achieved when no edge is able to reassign its load such that the maximum load
among its two endpoints gets reduced.
In other words, local optimality implies global optimality.

In fact, this property holds even for approximately optimal solutions.
We show in Section~\ref{subsec:localopt} that any solution $\bm{f}$ which satisfies an $\eta$-additive approximation to local optimality guarantees an approximate global optimal solution
with a multiplicative error of at most $1 - O(\sqrt{\eta\log n/\hat{\rho}_G})$,
where $\hat{\rho}_G$ denotes the maximum vertex load in $\bm{f}$.
Here, an $\eta$-additive approximation implies that for any edge,
the maximum among its endpoint loads can only be reduced by
at most $\eta$ by reassigning the edge.
So, given an estimate of $\hat{\rho}_G$ and a desired approximation factor $\epsilon$,
we can deduce the required slack parameter $\eta$,
which we will alternatively denote as a function $\eta(\hat{\rho}_G, \epsilon)$.

To do away with fractional edge assignments, in Section~\ref{subsec:orientation}
we scale up the graph by duplicating each edge an appropriate number of times.
When $\eta$ is an integer,
one can always achieve an $\eta$-additive approximation to local optimality
by assigning each edge completely to one of its endpoints.
We visualize such a load assignment via a directed graph,
by orienting each edge towards the vertex to which it is assigned.
Now, the load on every vertex $v$ is simply its in-degree $\indeg(v)$.
Then, an $\eta$-approximate local optimal solution is achieved by orienting each edge such that there is no edge $\overrightarrow{uv}$ with
$\indeg(v) -\indeg(u) > \eta$, because otherwise,
we can flip the edge to achieve a better local solution.
Let us call this a \emph{locally $\eta$-stable oriented graph}.

This leaves the following challenges in extending this idea to a fully dynamic algorithm:
\vspace*{-2mm}
\begin{enumerate}\itemsep=0pt
	\item How can we maintain
	a \emph{locally $\eta$-stable oriented graph} under insertion/deletion operations efficiently?
	\item How do we maintain an accurate estimate of $\eta$ while the graph (and particularly $\hat{\rho}_G$) undergoes changes?
\end{enumerate}  
\vspace*{-2mm}

In Sections~\ref{subsec:datastructure} and~\ref{subsec:threshold}, we solve the first issue
using a technique similar to 
that used by Kopelowitz \etal \cite{KopelowitzKPS14} for the graph orientation problem.
When an edge is inserted or deleted, it causes a vertex
to change its in-degree, which might cause an incident edge to break the
invariant for local $\eta$-stability.
If we flip the edge to fix this instability,
it might cause further instabilities.
To avoid this cascading of unstable edges,
we first identify a maximal chain of ``tight" edges -
edges that are close to breaking the local stability
constraint, and flip all edges in such a chain.
This way, we only increment the degree of the last vertex in the chain.
Since the chain was maximal, this increment maintains the stability condition.
By defining a ``tight" edge appropriately, and applying the same argument to the deletion operation,
we show that each update incurs at most $O(\hat{\rho}_G/\eta)$ flips. 
This chain of tight edges closely relates to the concept of augmenting paths
in network flows \cite{FordF10} and matchings \cite{MicaliV80, DuanP14},
which seems fitting, considering our intuition that densest subgraph
relates closely to these problems.

In Section~\ref{subsec:overallalgo}, we solve the second issue - by simply
running the algorithm for $O(\log n)$ values of $\eta$, and using the appropriate
version of the algorithm to query the solution.

\subsection{Sufficiency of local approximation} \label{subsec:localopt}

From Equation~\ref{eqn:duality}, we know that the optimal solution to $\textsc{Dual}(G)$ gives the
exact maximum subgraph density of $G$, $\rho^*_G$.
Let us interpret the variables of $\textsc{Dual}(G)$ as follows:
\begin{itemize}\itemsep=0pt
	\item Every edge $e=uv$ assigns itself fractionally to one of its two endpoints.
	$f_e(u)$ and $f_e(v)$ denote these fractional loads.
	\item $\sum_{e \ni v} f_e(v)$ is the total load assigned to $v$.
	We denote this using $\ell_v$.
	\item The objective is simply $\max_{v \in V} \ell_v$.
\end{itemize}

If there is any edge $e=uv$ such that $f_e(u) > 0$ and $\ell_u > \ell_v$.
Then $e$ can transfer an infinitesimal amount of load from $u$ to $v$
while not increasing the objective.
Hence, there always exists an optimal solution where for any edge $e=uv$,
$f_e(u) > 0 \implies \ell_u \leq \ell_v$.
Using this intuition, we write the approximate version of $\textsc{Dual}(G)$
by providing a slack of $\eta$ to the above condition. We call this
relaxed LP as $\textsc{Dual}(G, \eta)$.

\begin{center}
	\begin{tcolorbox}[width=0.7\linewidth]
		\begin{center}
			$\textsc{Dual}(G, \eta)$
		\end{center}
		\vspace*{-5mm}
		\begin{equation*}
		\begin{array}{lr@{}ll}
		&  \ell_v & = \displaystyle\sum\limits_{e \ni v}  f_e(v)  \qquad & \forall u \in V \\
		& f_e(u) +  f_e(v) & = 1,                   & \forall e=uv \in E\\
		&                   f_e(u), f_e(v) & \geq 0,        & \forall e=uv \in E\\
		&                 	\ell_u & \leq \ell_v + \eta,        & \forall e=uv \in E, f_e(u) > 0
		\end{array}
		\end{equation*}
	\end{tcolorbox}
\end{center}

Theorem~\ref{thm:lp-approx} states that this local condition
is, in fact, also sufficient to achieve global near-optimality.
Specifically, it shows that $\textsc{Dual}(G, \eta)$
provides a $\nicefrac{1}{(1-\epsilon)}$-approximation to $\rho^*_G$,
where $\eta$ is a parameter depending on $\epsilon$ described later.
Kopelowitz \etal \cite{KopelowitzKPS14} use an identical argument
to show the sufficiency of local optimality for the graph orientation problem.

\begin{theorem} \label{thm:lp-approx}
	Given an undirected graph $G$ with $n$ vertices,
	let $\bm{\hat{f}}, \bm{\hat{\ell}}$ denote any feasible solution to $\textsc{Dual}(G, \eta)$,
	and let $\hat{\rho}_G \defeq \max_{v \in V} \hat{\ell}_v$.
	Then,
	\[
	\left(1 - 3\sqrt {\dfrac{\eta\log n}{\hat{\rho}_G}} \right) \cdot {\hat{\rho}_G}
	\leq
	\rho_G^*
	\leq
	\hat{\rho}_G.
	\]
\end{theorem}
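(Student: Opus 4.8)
The plan is to prove the two inequalities separately. The upper bound $\rho_G^* \le \hat\rho_G$ is essentially immediate: any feasible solution to $\textsc{Dual}(G,\eta)$ is in particular a feasible solution to $\textsc{Dual}(G)$ (the constraints $f_e(u)+f_e(v)=1$ and $f_e\ge 0$ match, and $\ell_v \le \hat\rho_G$ for all $v$ gives $D=\hat\rho_G$), so by weak duality as in Equation~\eqref{eqn:duality} we get $\rho_G^* \le \hat\rho_G$. The content of the theorem is the lower bound.

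For the lower bound, I would exhibit a good feasible solution to $\textsc{Primal}(G)$, i.e.\ a dense subgraph, by a \emph{level-set} argument on the load values $\hat\ell_v$. Order the vertices by $\hat\ell_v$ and, for a threshold $t$, let $S_t = \{v : \hat\ell_v > t\}$. The key structural observation is the local-stability constraint: if $f_e(u) > 0$ then $\hat\ell_u \le \hat\ell_v + \eta$, which means an edge $e=uv$ can only "point into" (contribute its full unit, or most of it) the higher-load endpoint. Concretely, if $u \in S_t$ but $v \notin S_{t+\eta}$ (so $\hat\ell_v \le t+\eta < \hat\ell_u$... careful with signs), then $f_e(u)$ must be small — in fact the full weight of such boundary edges is controlled. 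Summing $\hat\ell_v$ over $v \in S_t$ counts (almost) all edges inside $S_t$ plus a controlled amount of boundary contribution, giving roughly
\[
|E(S_t)| \;\ge\; \tfrac{1}{2}\sum_{v \in S_t}\hat\ell_v \;-\;(\text{boundary loss}) \;\ge\; \tfrac{1}{2}|S_t|\,t - (\text{boundary loss}),
\]
so $\rho_G(S_t) = |E(S_t)|/|S_t| \gtrsim t/2$ would already be too weak by a factor of $2$; the standard fix (as in Charikar's analysis) is to choose the threshold adaptively so that $S_t$ is not much larger than $S_{t+\eta}$, which forces $|E(S_t)|$ to be close to $\sum_{v\in S_t}\hat\ell_v$ rather than half of it.

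The crucial quantitative step is a \emph{geometric / pigeonhole} argument on the nested family $S_{\hat\rho_G} \subseteq S_{\hat\rho_G - \eta} \subseteq \cdots$. Since sizes can grow by at most a factor of $n$ overall over $\Theta(\hat\rho_G/\eta)$ steps, there must exist some level $t^\star = \hat\rho_G - k\eta$ with $k = O(\sqrt{(\hat\rho_G/\eta)\log n})$ such that $|S_{t^\star - \eta}| \le (1 + \delta)|S_{t^\star}|$ for $\delta \approx \sqrt{\eta\log n/\hat\rho_G}$ — this is where the $\sqrt{\eta\log n/\hat\rho_G}$ and the factor $3$ come from, by optimizing $k$ against $\delta$. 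For that level, the local constraint implies every edge with an endpoint in $S_{t^\star}$ either has both endpoints in $S_{t^\star}$ or lands in the thin shell $S_{t^\star-\eta}\setminus S_{t^\star}$, whose total incident load is small; hence $|E(S_{t^\star})| \ge (1-O(\delta))\sum_{v \in S_{t^\star}}\hat\ell_v \ge (1-O(\delta))\,t^\star\,|S_{t^\star}| \ge (1 - 3\sqrt{\eta\log n/\hat\rho_G})\,\hat\rho_G\,|S_{t^\star}|$, giving the claimed density. I expect the main obstacle to be bookkeeping the boundary-edge loss precisely enough that the constants collapse to exactly $3$: one has to simultaneously account for (i) the factor between $\sum \hat\ell_v$ and $2|E(S_{t^\star})|$, (ii) the drop $t^\star$ versus $\hat\rho_G$, and (iii) the shell thickness, and balance the step count $k$ against the size-ratio slack $\delta$ via the inequality $(1+\delta)^k \le n$. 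Handling the degenerate regime $\eta\log n \ge \hat\rho_G/9$ (where the bound is vacuous) separately keeps the square roots well-defined.
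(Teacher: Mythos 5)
Your overall strategy (level sets of the loads, a pigeonhole over the nested family to find a level whose next level is at most a $(1+\delta)$ factor larger, then balancing the level depth $k$ against $\delta$ via $(1+\delta)^k\le n$ with $\delta\approx\sqrt{\eta\log n/\hat\rho_G}$) is exactly the paper's, but your crucial quantitative step has a genuine gap: you output the \emph{inner} set $S_{t^\star}$ and claim $|E(S_{t^\star})|\ge(1-O(\delta))\sum_{v\in S_{t^\star}}\hat\ell_v$ because the thin shell's ``total incident load is small.'' That inference is false. The load carried by a crossing edge $e=uv$ with $u\in S_{t^\star}$ and $v$ in the shell is assigned to the \emph{inner} endpoint $u$, so it is not counted in, and not bounded by, the shell's load; the number of such edges can be as large as $|S_{t^\star}|\cdot|\text{shell}|$, which need not be an $O(\delta)$ fraction of $\sum_{v\in S_{t^\star}}\hat\ell_v\approx\hat\rho_G|S_{t^\star}|$. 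Concretely, take $S_{t^\star}$ to be $N=\hat\rho_G/\delta$ vertices with \emph{no} internal edges, a shell of $\hat\rho_G$ vertices, a complete bipartite graph between them with every crossing edge assigning its full unit to the inner endpoint (this satisfies $\ell_u\le\ell_v+\eta$ with equality when the shell loads are $\hat\rho_G-\eta$), and feed the shell, level by level, from vertices of load $\hat\rho_G-2\eta,\hat\rho_G-3\eta,\dots$ below. Then $|S_{t^\star-\eta}|\le(1+\delta)|S_{t^\star}|$ holds, yet $\rho_G(S_{t^\star})=0$.

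The fix is precisely what the paper does: output the \emph{outer} set $S_{t^\star-\eta}$ (the paper's $T_{k+1}$) rather than $S_{t^\star}$. Then the local constraint guarantees every edge contributing load to a vertex of $T_k=S_{t^\star}$ has \emph{both} endpoints in $T_{k+1}$, and since each edge distributes a total load of exactly $1$ between its endpoints, $\sum_{u\in T_k}\hat\ell_u\le|E(T_{k+1})|$ with no boundary loss at all; the only losses are the size dilution $|T_k|/|T_{k+1}|\ge 1-r$ and the level drop $\eta k\le 2\eta\log n/r$, which you balance exactly as you proposed to get the constant $3$. Relatedly, your worry about a factor of $2$ is a red herring for the same reason (summing $\hat\ell_v$ over a set counts each internal edge at most once, since $f_e(u)+f_e(v)=1$, unlike a degree sum), and the adaptive choice of threshold plays no role in removing any such factor --- its only role is the $k$-versus-$\delta$ trade-off.
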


\begin{proof}
	Any feasible solution of $\textsc{Dual}(G, \eta)$ is also a feasible
	solution of ${\textsc{Dual}}(G)$, and so we have $\rho_G^* \leq \hat{\rho}_G$.

	Denote by $T_i$ the set of vertices with load at least $\hat{\rho}_G - \eta i$, i.e.,
	$
	T_i \defeq \left\{ v \in V \mid \hat{\ell}_v \geq \hat{\rho}_G - \eta i \right\}.
	$
	Let $0 < r < 1$ be some adjustable parameter we will fix later.
	We define $k$ to be the maximal integer such that for any $1 \leq i \leq k$,
	$
	|T_i| \geq |T_{i-1}|(1+r).
	$
	Note that such a maximal integer $k$ always exists because there are finite number of vertices in $G$ and the size of $T_i$ grows exponentially. By the maximality of $k$,
	$
	|T_{k+1}| < |T_k|(1+r).
	$
	In order to bound the density of this set $T_{k+1}$, we compute the total
	load on all vertices in $T_{k}$.
	For any $u \in T_k$,
	the load on $u$ is given by
	\[
	\hat{\ell}_u = \sum_{uv \in E} \hat{f}_{uv}(u).
	\]
	However, we know that
	$
	f_{uv}(u) > 0 \implies \hat{\ell}_v \geq \hat{\ell}_u-\eta,
	$
	and hence we only need to count for $v \in T_{k+1}$.
	Summing over all vertices in $T_{k+1}$, we get
	\[
	\sum_{u \in T_k} \hat{\ell}_u = \sum_{u \in T_k, v \in T_{k+1}} \hat{f}_{uv}(u)  \leq \sum_{u \in T_{k+1}, v \in T_{k+1}} \hat{f}_{uv}(u) = |E(T_{k+1})|.
	\]
	Consider the density of set $T_{k+1}$,
	\[
	\rho_G(T_{k+1}) = \dfrac{|E(T_{k+1})|}{|T_{k+1}|} \geq \dfrac{\sum_{u \in T_k} \hat{\ell}_u}{|T_{k+1}|} \geq \dfrac{|T_k|\cdot (\hat{\rho}_G - \eta k)}{|T_{k+1}|},
	\]
	where the last inequality follows from the definition of $T_k$.
	
	Using the fact that $|T_k|/|T_{k+1}| > 1/(1+r) \geq 1-r$,
	\begin{align*}
	\rho_G(T_{k+1}) \geq (1-r)(\hat{\rho}_G - \eta k) \geq \hat{\rho}_G (1-r)\left( 1 - \dfrac{2\eta\log n}{r \cdot \hat{\rho}_G} \right),
	\end{align*}
	where the last inequality comes from the fact that 
	$n \geq |T_{k}| \geq (1+r)^k$, which implies that
	$k \leq \log_{1+r} n \leq 2 \log n / r$.
	
	Now, we can set our parameter $r$ to maximize the term on the RHS.
	By symmetry, the maximum is achieved when both terms in the product are equal
	and hence we set
	\[
	r = \sqrt{\dfrac{2 \eta\log n}{\hat{\rho}_G}}.
	\]
	This gives
	\[
	\rho_G(T_{k+1}) \geq \hat{\rho}_G \cdot \left( 1 - \sqrt{\dfrac{2\eta \log n}{\hat{\rho}_G}} \right)^2 \geq
	\hat{\rho}_G \cdot \left( 1 - 2 \sqrt{\dfrac{2\eta \log n}{\hat{\rho}_G}} \right) \geq
	\hat{\rho}_G \cdot \left( 1 - 3 \sqrt{\dfrac{\eta\log n}{\hat{\rho}_G}} \right).
	\]
Lastly, since $\rho_G(T_{k+1})$ can be at most the maximum subgraph density $\rho_G^*$,
the theorem follows.
\end{proof}

The set $T_{k+1}$, in the above proof, is actually a subgraph of $G$
with density at least $\rho_G^*(1-3\sqrt{\eta \log n/\hat\rho_G})$.
However, we need the exact value of $\hat{\rho}_G$ to find this set.
As we will see in later sections, we will only have access to an estimate $\rho^{\textsf{est}}$ of the form:
$
\rho^{\textsf{est}} \leq \hat{\rho}_G \leq 2\rho^{\textsf{est}}.
$
So, if we instead set
\begin{align} \label{eqn:r}
r = \sqrt{\dfrac{2\eta\log n}{\rho^{\textsf{est}}}},
\end{align}
we get
\[
\rho_G(T_{k+1}) \geq \hat{\rho}_G \cdot \left( 1 - \sqrt{\dfrac{2\eta \log n}{\hat{\rho}_G}} \right)\left( 1 - 2\sqrt{\dfrac{\eta \log n}{\hat{\rho}_G}} \right) \geq
\hat{\rho}_G \cdot \left( 1 - 4 \sqrt{\dfrac{\eta\log n}{\hat{\rho}_G}} \right).
\]
Using the fact that $\hat{\rho}_G \geq \rho_G^*, \rho^{\textsf{est}}$ gives us the following corollary.

\begin{corollary} \label{cor:subgraph}
\[
\rho_G(T_{k+1}) \geq \rho_G^* \cdot \left( 1 - 4 \sqrt{\dfrac{\eta\log n}{\rho^{\textsf{est}}}} \right),
\]
where $T_{k+1}$ is as defined in the proof of Theorem~\ref{thm:lp-approx},
using the value of $r$ as defined in \eqref{eqn:r}.
\end{corollary}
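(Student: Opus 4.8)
The plan is to replay the proof of Theorem~\ref{thm:lp-approx} up to the point where the free parameter $r$ gets fixed, and then substitute the perturbed value from \eqref{eqn:r} in place of the density-optimal one. The key observation is that the chain
\[
\rho_G(T_{k+1}) \;\ge\; (1-r)\bigl(\hat\rho_G - \eta k\bigr) \;\ge\; \hat\rho_G\,(1-r)\left(1 - \frac{2\eta\log n}{r\,\hat\rho_G}\right)
\]
(which comes from $|T_k|/|T_{k+1}| \ge 1-r$ and $k \le 2\log n/r$, exactly as in that proof) never uses the particular choice of $r\in(0,1)$; so I only need to re-estimate the two factors on the right for $r = \sqrt{2\eta\log n/\rho^{\textsf{est}}}$, now exploiting the available estimate $\rho^{\textsf{est}}$, which is only guaranteed to satisfy $\rho^{\textsf{est}} \le \hat\rho_G \le 2\rho^{\textsf{est}}$, rather than the exact value of $\hat\rho_G$.

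For the first factor I would use $\rho^{\textsf{est}} \ge \hat\rho_G/2$ to get $r = \sqrt{2\eta\log n/\rho^{\textsf{est}}} \le 2\sqrt{\eta\log n/\hat\rho_G}$, hence $1-r \ge 1 - 2\sqrt{\eta\log n/\hat\rho_G}$. For the second factor I would use $\rho^{\textsf{est}} \le \hat\rho_G$ to write $\tfrac{2\eta\log n}{r} = \sqrt{2\eta\log n\,\rho^{\textsf{est}}} \le \sqrt{2\eta\log n\,\hat\rho_G}$, hence $1 - \tfrac{2\eta\log n}{r\,\hat\rho_G} \ge 1 - \sqrt{2\eta\log n/\hat\rho_G}$. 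Multiplying these two estimates, expanding with $(1-x)(1-y)\ge 1-x-y$ for $x,y\ge 0$, and bounding $\sqrt2\le 2$, reproduces the two-factor estimate displayed immediately before the statement and collapses it to $\rho_G(T_{k+1}) \ge \hat\rho_G\bigl(1 - 4\sqrt{\eta\log n/\hat\rho_G}\bigr)$.

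The last step is to trade $\hat\rho_G$ for the quantities $\rho_G^*$ and $\rho^{\textsf{est}}$ that actually appear in the statement. If $1 - 4\sqrt{\eta\log n/\rho^{\textsf{est}}} \le 0$, the asserted bound is non-positive whereas $\rho_G(T_{k+1}) \ge 0$, so it holds trivially. Otherwise $\rho^{\textsf{est}} \le \hat\rho_G$ forces $1 - 4\sqrt{\eta\log n/\hat\rho_G} \ge 1 - 4\sqrt{\eta\log n/\rho^{\textsf{est}}} \ge 0$, and then $\hat\rho_G \ge \rho_G^*$ together with non-negativity of that bracket give
\[
\rho_G(T_{k+1}) \;\ge\; \hat\rho_G\left(1 - 4\sqrt{\frac{\eta\log n}{\hat\rho_G}}\right) \;\ge\; \rho_G^*\left(1 - 4\sqrt{\frac{\eta\log n}{\hat\rho_G}}\right) \;\ge\; \rho_G^*\left(1 - 4\sqrt{\frac{\eta\log n}{\rho^{\textsf{est}}}}\right),
\]
which is the claim. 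I do not expect a genuine obstacle here; the only point requiring care is precisely this final substitution — replacing $\hat\rho_G$ by the smaller $\rho_G^*$ and $\rho^{\textsf{est}}$ is legitimate only once the bracket is known to be non-negative, which is why the degenerate case must be split off first — together with the slightly loose constant bookkeeping forced by using a $2$-approximation of $\hat\rho_G$ in $r$ instead of its true value.
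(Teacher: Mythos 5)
Your proposal is correct and follows essentially the same route as the paper: replay the chain $\rho_G(T_{k+1}) \geq \hat\rho_G(1-r)\bigl(1 - \tfrac{2\eta\log n}{r\hat\rho_G}\bigr)$ with $r=\sqrt{2\eta\log n/\rho^{\textsf{est}}}$, use $\rho^{\textsf{est}} \leq \hat\rho_G \leq 2\rho^{\textsf{est}}$ to bound the two factors, collapse the constants to $4$, and finish with $\hat\rho_G \geq \rho_G^*,\ \rho^{\textsf{est}}$. Your explicit split on the sign of the bracket $1-4\sqrt{\eta\log n/\rho^{\textsf{est}}}$ is a small point of rigor the paper leaves implicit, but it changes nothing substantive.
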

We can now set $\eta$ corresponding to the desired error $\epsilon$
and the estimate $\rho^{\textsf{est}}$.

\subsection{Equivalence to the graph orientation problem}
\label{subsec:orientation}
To obtain an $\epsilon$ approximation,
we need to set $\eta = \dfrac{\epsilon^2 \rho^{\textsf{est}}}{16 \log n}$.
For simpler analysis and to avoid working with fractional loads,
we duplicate each edge $\alpha \defeq \dfrac{64 \log n}{\epsilon^2}$ times.
By doing this, we ensure that $\rho^{\textsf{est}} \geq \hat{\rho}_G/2 \geq \rho_G^*/2 \geq \alpha/4$, 
and thus, $\eta \geq 1$.
This means we can do away with fractional assignments of edges and so each edge $u,v$ is now assigned to either $u$ or $v$.
We can now frame the question as follows:

\begin{tcolorbox}
Given an undirected graph $G$ and an integer $\eta$, we want to assign directions to edges in such a way that for any edge $\overrightarrow{uv}$,
\[
\indeg(v) \leq \indeg(u)+\eta.
\]
\end{tcolorbox}

The above graph orientation problem, i.e., dynamically orienting edges of a graph
to minimize the maximum in-degree, is well studied \cite{BrodalF99,Kowalik07,KopelowitzKPS14}.
Kopelowitz \etal give an efficient dynamic algorithm for the problem,
where the update time depends on the arboricity\footnote{Arboricity is an alternate measure of density defined as $\alpha_G(V) = |E(V)|/(|V|-1)$, and is within $O(1)$ of our density measure.} of the graph with worst-case time bounds.
Our technique for inserting and deleting edges mimics the algorithm by Kopelowitz \etal \cite{KopelowitzKPS14}. However, for our problem, the slack parameter $\eta$ grows linearly with the maximum vertex load.
Hence, we can exploit this additional power to arrive at worst-case times independent of any measure of actual density in the graph.
Additionally, to bound the cost of a vertex informing its updated degree
to its neighbors, we use a lazy round-robin informing technique,
in which not all neighbors are always informed of the latest updates.
We expand on these details in the rest of the section.

\subsection{Data structure for edge flipping in directed graphs}
\label{subsec:datastructure}

At the lowest level, we want to build a data structure that maintains a directed graph undergoing changes.
Ideally, we want each vertex to know its neighbors' labels, so that we can quickly find
any edge violating or exactly satisfying the approximation condition.
We refer to the latter as a \emph{tight} edge.
However, this property is expensive because each vertex could possibly have too many neighbors to inform.
Specifically, each vertex could have up to $\hat{\rho}_G$ in-neighbors and as many as $n-1$ out-neighbors.

We deal with this issue in the following way.
Since a vertex can have $\Omega(n)$ out-neighbors, it does not inform its changes
to its out-neighbors, but only its in-neighbors.
So, any vertex remembers the labels of its out-neighbors.
Hence, it is easy to find a tight outgoing edge; however,
to find a tight incoming edge, we need to query the labels
of all its in-neighbors.
Hence, both the update subroutines and finding a tight incoming edge - 
use as many as $\hat{\rho}_G$ operations.

However, $\hat{\rho}_G$ can also get prohibitively large when the graph sees many insertions, and can reach $\Omega(n)$ (e.g. in a clique).
To tackle this, we relax the requirement for tightness of an edge: we say that an edge
$\overrightarrow{uv}$ is tight if
$
\indeg(v) \geq \indeg(u) + \eta/2.
$
Now, finding a tight edge becomes less strict - importantly it now suffices to update
one's in-neighbors (or query one's in-neighbors) once every $\eta/4$ iterations.
So, in each update, a vertex $v$ only informs $4 \indeg(v)/\eta$ of its neighbors in round-robin fashion. This reduces the number of operations to $O(\alpha)$ per update, as desired.

\begin{algorithm}[!htb]
	\DontPrintSemicolon
	We maintain the following global data structure:\;
	\nlset{$\bullet$} $\textsc{Labels}$: Balanced binary search tree with all labels. We store the max element separately.\; 
	\BlankLine
	Each vertex $u$ maintains the following data structures:\;
	\BlankLine
	\nlset{$\bullet$} $d(u)$: $u$'s label, initialized to $0$.\;
	\nlset{$\bullet$} $\textsc{InNbrs}_u$: List of $u$'s in-neighbors, initialized to $\emptyset$.\;
	\nlset{$\bullet$} $\textsc{OutNbrs}_u$: Max-priority queue of $u$'s out-neighbors indexed using $d_u$, initialized to $\emptyset$.\;
	\BlankLine
	\begin{minipage}{0.5\textwidth}
		\DontPrintSemicolon
		\SetKwProg{myalg}{Operation}{}{}
		\SetKwFunction{add}{add}
		\myalg{\add{$\overrightarrow{uv}$}}{
			Add $u$ to $\textsc{InNbrs}_v$\;
			Add $v$ to $\textsc{OutNbrs}_u$ with key $d_u(v) \gets d(v)$\;
		}{}
		\BlankLine
		\SetKwFunction{remove}{remove}
		\myalg{\remove{$\overrightarrow{uv}$}}{
			Remove $u$ from $\textsc{InNbrs}_v$\;
			Remove $v$ from $\textsc{OutNbrs}_u$\;}{}
		\BlankLine
		\SetKwFunction{flip}{flip}
		\myalg{\flip{$\overrightarrow{uv}$}}{
			\remove{$\overrightarrow{uv}$}\;
			\add{$\overrightarrow{vu}$}\;}{}
		\BlankLine
		\SetKwFunction{increment}{increment}
		\myalg{\increment{$u$}}{
			$d(u) \gets d(u)+1$\;
			Update $d(u)$ in $\textsc{Labels}$\;
			\For{$v \in \text{the next } \frac{4\indeg(u)}{\eta} \text{ } \textsc{InNbrs}_u$}{
				Update $d_v(u) \gets d(u)$ in $\textsc{OutNbrs}_v$}
		}{}
		\BlankLine
		\SetKwFunction{decrement}{decrement}
		\myalg{\decrement{$u$}}{
			$d(u) \gets d(u)-1$\;
			Update $d(u)$ in $\textsc{Labels}$\;
			\For{$v \in \text{the next } \frac{4\indeg(u)}{\eta} \text{ } \textsc{InNbrs}_u$}{
				Update $d_v(u) \gets d(u)$ in $\textsc{OutNbrs}_v$}
		}{}
	\end{minipage}%
	\begin{minipage}{0.5\textwidth}
		\DontPrintSemicolon
		\SetKwProg{myalg}{Operation}{}{}
		\SetKwFunction{tightinnbr}{tight\_in\_nbr}
		\myalg{\tightinnbr{$u$}}{
			\For{$v \in \text{the next } \frac{4\indeg(u)}{\eta} \text{ } \textsc{InNbrs}_u$}{
				\lIf{$d(v) \leq d(u)-\eta/2$}{\KwRet{$v$}}
			}
			\KwRet{$\mathsf{null}$}}{}
		\BlankLine
		\SetKwFunction{tightoutnbr}{tight\_out\_nbr}
		\myalg{\tightoutnbr{$u$}}{
			$t \gets \textsc{OutNbrs}[u].\mathsf{max}$\;
			\lIf{$d_u(v) \geq d(u)+\eta/2$}{\KwRet{$v$}}
			\lElse{\KwRet{$\mathsf{null}$}}}{}
		\BlankLine
		\SetKwFunction{ulabel}{label}
		\myalg{\ulabel{}}{
			\KwRet{$d(u)$}}{}
		\BlankLine
		\SetKwFunction{maxlabel}{max\_label}
		\myalg{\maxlabel{}}{
			\KwRet{$\textsc{Labels}.\mathsf{max}$}}{}
		\BlankLine
		
		\SetKwFunction{maxlabelset}{maximal\_label\_set}
		\SetKwRepeat{Do}{do}{while}
		\myalg{\maxlabelset{$r$}}{
			$m \gets \mathtt{max\_label}()$\;
			\Do{$|B|/|A| \geq 1+r$}{
				$A \gets \text{elements}\geq m-\eta$ \text{ in \textsc{Labels}}\;
				$B \gets \text{elements}\geq m-2\eta$ \text{ in \textsc{Labels}}\;
				$m \gets m - \eta$\;
			}
			\KwRet{$B$}
		}{}		
	\end{minipage}
	\caption{$\textsc{LazyDirectedLabels}(G,\eta)$: A data structure to maintain a directed graph with vertex labels. $V$ and $\eta$ are known.\label{alg:datastructure}}
\end{algorithm}

\begin{lemma}\label{lem:datastructure}
There exists a data structure $\textsc{LazyDirectedLabels}(G,\eta)$ which can maintain a directed graph $G(V,E)$,
appended with vertex labels $d : V \mapsto \mathbb{Z}^+$
while undergoing the following operations:
\begin{itemize}\itemsep=0pt
	\item $\mathtt{add}\left(\overrightarrow{uv}\right)$: add an edge into $G$,
	\item $\mathtt{remove}\left(\overrightarrow{uv}\right)$: remove an edge from $G$,
	\item $\mathtt{increment}(u)$: increment $d(u)$ by $1$,
	\item $\mathtt{decrement}(u)$: decrement $d(u)$ by $1$,
	\item $\mathtt{flip}\left(\overrightarrow{uv}\right)$: flip the direction of an edge in $G$,
	\item $\mathtt{tight\_in\_nbr}(u)$: find an in-neighbor $v$ with $d(v) \leq d(u) - \eta/2$, and
	\item $\mathtt{tight\_out\_nbr}(u)$: find an out-neighbor $v$ with $d(v) \geq d(u) + \eta/2$.
	\item $\mathtt{label}(u)$: output $d(u)$.
	\item $\mathtt{max\_label}()$: output $\max_{v \in V}d(v)$.
	\item $\mathtt{maximal\_label\_set}(r)$: Output all elements with labels $\geq \mathtt{max\_label}() - \eta \cdot i$, where $i$ is the smallest integer such that
	$
	|\text{labels} \geq \eta \cdot (i+1)| < (1+r)|\text{labels} \geq \eta \cdot i |.
	$
\end{itemize}
Moreover, the operations $\mathtt{add}$, $\mathtt{remove}$ and $\mathtt{flip}$ can be processed in $O(\log n)$ time;
$\mathtt{tight\_in\_nbr}$, $\mathtt{increment}$ and $\mathtt{decrement}$ can be processed in $O(\alpha)$ time;
and $\mathtt{tight\_out\_nbr}$ and $\mathtt{max\_label}$ can be processed in $O(1)$ time.
$\mathtt{maximal\_label\_set}$ can be processed in time in the order of the output size.

The pseudocode for this data structure is in Algorithm~\ref{alg:datastructure}.
\end{lemma}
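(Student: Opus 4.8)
The plan is to treat Lemma~\ref{lem:datastructure} as a bookkeeping exercise: fix a concrete implementation for each abstract sub-structure named in Algorithm~\ref{alg:datastructure}, verify operation by operation that the stated semantics hold, and charge the claimed running times. Almost everything follows by inspection plus standard balanced-BST accounting; the one ingredient requiring a genuine argument is a quantitative bound on how stale the cached labels $d_v(u)$ can become, which is precisely what legitimizes informing only $4\indeg(u)/\eta$ neighbours per update, and I expect that to be the crux.

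Concretely, I would realize $\textsc{Labels}$ as a balanced binary search tree over the multiset $\{d(v)\}_{v\in V}$ augmented with subtree sizes (so that, for any threshold $t$, the count $|\{v:d(v)\ge t\}|$ is computed in $O(\log n)$), plus a cached pointer to the maximum; this gives $\mathtt{max\_label}$ and $\mathtt{label}$ in $O(1)$ and re-insertion of a changed key in $O(\log n)$. Each $\textsc{InNbrs}_u$ becomes a doubly linked list carrying a round-robin cursor, with every incident edge holding an $O(1)$-time handle to its node, so $\mathtt{add}$/$\mathtt{remove}$ splice in $O(1)$ and the ``next $k$'' in-neighbours are read (and the cursor advanced) in $O(k)$. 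For $\textsc{OutNbrs}_u$ I would \emph{not} use a comparison heap but a bucketed list: a doubly linked list of the currently non-empty buckets, one per distinct key value and kept sorted, each out-neighbour pointing to its bucket; this supports find-max, delete-by-handle, and change-key-by-$\pm 1$ (move to the adjacent bucket, creating or relabelling it as needed) in $O(1)$ worst case, and insertion of a new out-neighbour at a given key in $O(\log n)$ by locating or creating its bucket through an auxiliary BST over bucket values. With these choices $\mathtt{add}$, $\mathtt{remove}$, $\mathtt{flip}$ perform a constant number of $O(\log n)$-time steps, and $\mathtt{tight\_out\_nbr}$ just inspects the last bucket of $\textsc{OutNbrs}_u$ and does one comparison, so those bounds are immediate.

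The heart of the proof is the invariant that for every directed edge $\overrightarrow{vu}$ the cached value $d_v(u)$ (the key of $u$ in $\textsc{OutNbrs}_v$) satisfies $|d_v(u)-d(u)|\le \eta/4$, up to an additive constant, at all times. Here $d(u)$ changes only through $\mathtt{increment}(u)$/$\mathtt{decrement}(u)$, each of which moves $d(u)$ by exactly $\pm 1$ and refreshes the next $4\indeg(u)/\eta$ entries of $\textsc{InNbrs}_u$; hence two consecutive refreshes of a fixed entry are separated by at most one full sweep of the cursor around the list, which costs at most $\lceil \eta/4\rceil$ such operations even while the list grows under the cursor — each operation advances the cursor by a $4/\eta$ fraction of the \emph{current} length — and a newly inserted in-neighbour starts with a fresh copy by the assignment in $\mathtt{add}$. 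So between refreshes $d(u)$ drifts by at most $\eta/4$, proving the invariant. Two consequences give correctness of the tightness queries: $\mathtt{tight\_in\_nbr}(u)$ reads the \emph{true} labels $d(v)$ of the in-neighbours it scans, so a returned $v$ genuinely satisfies $d(v)\le d(u)-\eta/2$ while the round-robin cursor guarantees every in-neighbour is examined at least once per $\lceil \eta/4 \rceil$ calls; and $\mathtt{tight\_out\_nbr}(u)$ returns the out-neighbour of maximum \emph{cached} key, which by the invariant is within $\eta/2$ of the out-neighbour of maximum true label, so a returned $v$ has $d(v)\ge d(u)+\eta/4$ and a $\mathsf{null}$ answer certifies that no out-neighbour exceeds $d(u)$ by more than $\eta$ — exactly the approximate tightness that Section~\ref{subsec:orientation} uses. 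For running times, $\mathtt{increment}$/$\mathtt{decrement}$ cost one $O(\log n)$ update of $\textsc{Labels}$ plus $4\indeg(u)/\eta$ change-key operations at $O(1)$ each; recalling $\indeg(u)\le \hat\rho_G\le 2\rho^{\textsf{est}}$ and $\eta=\Theta(\epsilon^2\rho^{\textsf{est}}/\log n)$ gives $4\indeg(u)/\eta=O(\log n/\epsilon^2)=O(\alpha)$, so the total is $O(\alpha)$, and $\mathtt{tight\_in\_nbr}$ similarly scans $O(\alpha)$ list entries at $O(1)$ each.

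For $\mathtt{maximal\_label\_set}(r)$ I would show that, writing $M$ for the current maximum label and $T_j:=\{v:d(v)\ge M-\eta j\}$, the $j$-th iteration of the \texttt{do--while} sets $A=T_{j-1}$ and $B=T_j$, so the loop halts at the first $i$ with $|T_{i+1}|<(1+r)|T_i|$ and returns $B=T_{i+1}$ — precisely the set $T_{k+1}$ from the proof of Theorem~\ref{thm:lp-approx} with $\hat\rho_G$ replaced by $M$, hence the subgraph of Corollary~\ref{cor:subgraph}. Termination is immediate since $|T_j|$ can exceed $(1+r)|T_{j-1}|$ for at most $\log_{1+r} n=O(\log n/r)$ consecutive $j$. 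Rather than recomputing $A,B$ each iteration, I would walk $\textsc{Labels}$ once in decreasing order from $M$, emitting the output vertices and maintaining incrementally the counts needed to evaluate the stopping test, for a cost of $O(\log n)$ to locate $M$ plus $O(1)$ per emitted element, i.e.\ time of the order of the output size. The step I anticipate as the main obstacle is the staleness invariant: making the ``at most one sweep'' argument airtight while $\textsc{InNbrs}_u$ is simultaneously growing and shrinking under the cursor, and verifying that the resulting $\eta/4$-drift lines up with the $\eta/2$ thresholds in the tightness queries tightly enough that reported neighbours are genuinely (approximately) tight — the rest is inspection and routine accounting.
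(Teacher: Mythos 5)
Your overall route is the same as the paper's: fix concrete implementations for $\textsc{Labels}$, $\textsc{InNbrs}_u$, $\textsc{OutNbrs}_u$ and charge each operation of Algorithm~\ref{alg:datastructure} by inspection; the paper's own proof of Lemma~\ref{lem:datastructure} is exactly this accounting (and nothing more -- the staleness analysis you fold in here is deferred, in the paper, to Lemma~\ref{lem:invariant}). Being explicit about what $\mathtt{tight\_in\_nbr}$/$\mathtt{tight\_out\_nbr}$ actually guarantee (true labels but only a partial scan in one case, cached labels within $\eta/4$ of the truth in the other) is a genuine improvement in precision over both the lemma statement and the paper's one-line correctness claim.

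The one step that does not hold as written is the $O(1)$ change-key bound for your bucketed $\textsc{OutNbrs}_v$. A refresh in $\mathtt{increment}(u)$/$\mathtt{decrement}(u)$ sets $d_v(u) \gets d(u)$, and by your own staleness invariant the cached key may be off by up to $\eta/4$, so the key jump is not $\pm 1$ and the ``move to the adjacent bucket'' step can require traversing many buckets (or an $O(\log n)$ lookup in the auxiliary BST). With the natural fix -- $O(\log n)$ per key update -- $\mathtt{increment}$/$\mathtt{decrement}$ cost $O(\alpha \log n)$, which is in fact what the paper's proof itself states (the lemma statement's $O(\alpha)$ is an internal inconsistency of the paper), and this is harmless downstream since $\alpha \geq \log n$, so Lemma~\ref{lem:threshold}'s $O(\alpha^2)$ insertion bound survives. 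Separately, your ``one full sweep costs at most $\lceil \eta/4 \rceil$ operations'' claim for the round-robin cursor is only immediate when $\textsc{InNbrs}_u$ does not shrink between refreshes; under deletions the per-step advance $4\indeg(u)/\eta$ is a fraction of the \emph{current} length, which can lag the remaining distance and degrade the bound by a logarithmic factor unless one argues more carefully (e.g., using that deletions do not change $d(u)$ and that freshly added in-neighbours start with exact cached values). You correctly flag this as the crux; note that the paper supplies no more detail either -- it simply asserts the once-per-$\eta/4$-updates property in the proof of Lemma~\ref{lem:invariant} -- so this is a gap you share with the paper rather than one you introduce.
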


\begin{proof}
	The correctness of the data structure follows from the description in Algorithm~\ref{alg:datastructure}.
	The operation $\mathtt{add}$ involves inserting an element into a list and
	a priority queue - giving a worst-case runtime of $O(\log n)$.
	The runtimes for $\mathtt{remove}$ and $\mathtt{flip}$ follow similarly.
	The operations $\mathtt{increment}$ and $\mathtt{decrement}$ involve 1 update to a balanced BST and $O(\alpha)$
	priority-queue updates, giving a worst-case runtime of
	$O(\alpha\log n)$ per call.
	$\mathtt{tight\_in\_nbr}$ queries $O(\alpha)$ neighbors,
	resulting in a worst-case runtime of $O(\alpha)$ per call.
	$\mathtt{tight\_out\_nbr}$, $\mathtt{label}$ and $\mathtt{max\_label}$ simply check an element pointer, resulting in a $O(1)$ runtime.
	Lastly, $\mathtt{maximal\_label\_set}$ traverses a balanced BST, until it exceeds the desired threshold. The time taken is $O(\beta + \log n)$ where $\beta$ is the number of elements read, which is also the size of the output.
\end{proof}

\subsection{Fully dynamic algorithm for a given density estimate}
\label{subsec:threshold}

Here, we assume that an estimate of $\rho_G^*$ (equivalently, an estimate of $\hat{\rho}_G$)
is known. We denote this estimate as $\rho^{\textsf{est}}$, where
$
\rho^{\textsf{est}} \leq \hat{\rho}_G \leq 2\rho^{\textsf{est}}.
$
Using this, we can compute the appropriate $\eta(\rho^{\textsf{est}}, \epsilon) \defeq 2\rho^{\textsf{est}}/\alpha$. Recall that $\alpha$ was defined as
$\alpha \defeq 64 \log n \cdot \epsilon^{-2}$.
From Section~\ref{subsec:datastructure},
we have an efficient data structure to maintain a directed graph,
which we will use to maintain a locally $\eta$-stable orientation.
This in turn gives a fully dynamic algorithm which processes
updates efficiently, as we explain below.

We first define a \emph{tight edge} in a locally stable oriented graph:

\begin{definition}
An edge $\overrightarrow{uv}$ is said to be \emph{tight} if $\indeg(v) \geq \indeg(u) + \eta/2$.
\end{definition}

Now, consider inserting an edge $\overrightarrow{xy}$ into
a locally $\eta$-stable oriented graph.
Since $y$'s in-degree increases, it could potentially
have an in-neighbor $z$ such that $\indeg(z) < \indeg(y)-\eta$.
Note that for this to happen, $\overrightarrow{zy}$ was necessarily a tight edge.
To ``fix" this break in stability, we flip the edge $yz$; however, this causes
$w$'s in-degree to increase, which we now possibly need to fix.
Before explaining how we circumvent this issue,
let us define a \emph{maximal tight chain}.
\begin{definition}
A maximal tight chain \emph{from} a vertex $v$ is a path
of tight edges $\overrightarrow{uv_1}, \overrightarrow{v_1 v_2}, \ldots, \overrightarrow{v_k w}$, such that $w$ has no tight outgoing edges.

A maximal tight chain \emph{to} a vertex $v$ is a path
of tight edges $\overrightarrow{wv_1}, \overrightarrow{v_1 v_2}, \ldots, \overrightarrow{v_k v}$, such that $w$ has no tight incoming edges.
\end{definition}

Now, instead of fixing the ``unstable" edge caused by the increase
in $y$'s in-degree right away,
we instead find a maximal tight chain \emph{to} $y$
and flip all the edges in the chain.
This way, the in-degrees of all vertices in the chain
except the start remain the same.
Due to the maximality of the chain,
the start of the chain has no incoming tight edges,
and hence increasing its in-degree by $1$ will not break local stability.
The same argument holds when we delete $\overrightarrow{xy}$,
except we find a maximal tight chain \emph{from} $y$.

The approximate density is nothing but the highest load in the graph. 
For querying the actual subgraph itself,
we use the observation from Section~\ref{subsec:localopt},
where the required subgraph can be found by:
(i) finding sets of vertices with load at most $\eta \cdot i$ less than the maximum ($T_i$), and
(ii) returning the first $T_{i+1}$ such that $|T_{i+1}|/|T_i| < 1 + r$,
where $r$ is an appropriate function of $\eta$.

\begin{lemma}\label{lem:threshold}
	There exists a data structure $\textsc{Threshold}(G,\eta)$ which can maintain an undirected graph $G(V,E)$
	while undergoing the following operations:
	\begin{itemize}\itemsep=0pt
		\item $\mathtt{insert}(u,v)$: insert an edge into $G$,
		\item $\mathtt{delete}(u,v)$: delete an edge from $G$,
		and report the vertex with decreased load
		\item $\mathtt{query\_load}(u)$: output the current load of $u$.
		\item $\mathtt{query\_density}()$: output a value $\rho_{\mathsf{out}}$ such that $(1-\epsilon)\rho_G^* \leq \rho_{\mathsf{out}} \leq \rho_G^*$.
		\item $\mathtt{query\_subgraph}()$: output a subgraph with density at least $(1-\epsilon)\rho_G^*$.
	\end{itemize}
	Moreover, the operation $\mathtt{insert}$ takes $O(\alpha^2)$ time,
	$\mathtt{delete}$ takes $O(\alpha \log n)$ time,
	$\mathtt{query}$ takes $O(1)$ time,
	and $\mathtt{query\_subgraph}$ takes $O(\beta + \log n)$ time,
	where $\beta$ is the size of the output.

	The pseudocode for this data structure is in Algorithm~\ref{alg:threshold}.
\end{lemma}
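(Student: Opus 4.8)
The plan is to run the data structure $\textsc{LazyDirectedLabels}(\hat{G},\eta)$ of Lemma~\ref{lem:datastructure} on the multigraph $\hat{G}$ obtained from $G$ by replacing every edge with $\alpha$ parallel copies, and to keep, as an invariant, a \emph{locally $\eta$-stable} orientation of $\hat{G}$ --- one in which every oriented edge $\overrightarrow{uv}$ satisfies $\indeg(v)\le\indeg(u)+\eta$ --- with each label $d(u)$ equal to $\indeg(u)$, maintained by issuing $\mathtt{increment}/\mathtt{decrement}$ whenever an orientation change alters an in-degree. Because $\eta$ (chosen as in Section~\ref{subsec:orientation}) is at least $1$ and may be taken integral, such an orientation assigns each edge-copy entirely to one endpoint and is a feasible solution of $\textsc{Dual}(\hat{G},\eta)$ with vertex loads $\indeg(v)$ and $\hat{\rho}_{\hat{G}}=\max_v\indeg(v)$. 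Then $\mathtt{query\_load}(u)$ returns $d(u)/\alpha$ and $\mathtt{query\_density}()$ returns $(1-\epsilon)\cdot\mathtt{max\_label}()/\alpha$, both $O(1)$ via the corresponding primitives; the density guarantee is Theorem~\ref{thm:lp-approx} applied to $\hat{G}$, together with the fact that the chosen $\eta$ makes $3\sqrt{\eta\log n/\hat{\rho}_{\hat{G}}}\le\epsilon$. For $\mathtt{query\_subgraph}()$ I call $\mathtt{maximal\_label\_set}(r)$ with $r$ as in \eqref{eqn:r} and return the vertex set it produces; Corollary~\ref{cor:subgraph} certifies its density, and Lemma~\ref{lem:datastructure} gives the $O(\beta+\log n)$ time.

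For $\mathtt{insert}(u,v)$ I add the $\alpha$ copies of $uv$ one at a time: a copy is oriented towards the endpoint $y$ of currently smaller label, inserted with $\mathtt{add}$, and then $\mathtt{increment}(y)$ is called. The only edge whose stability this can break is an in-edge of $y$, and such an edge must have been \emph{tight} --- $\indeg(y)\ge\indeg(z)+\eta/2$ --- beforehand. Instead of chasing the ensuing cascade, I trace a \emph{maximal tight chain to} $y$ by repeatedly calling $\mathtt{tight\_in\_nbr}$ until reaching a vertex $w$ with no tight in-edge, $\mathtt{flip}$ every edge of the chain, and then call $\mathtt{increment}(w)$. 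A chain flip leaves every interior in-degree unchanged (each interior vertex loses and gains exactly one in-edge), cancels the earlier increment at $y$, and raises only $\indeg(w)$, by one; since the chain was maximal, $w$ had no tight in-edge, so $\indeg(w)+1\le\indeg(z)+\eta$ still holds for every in-neighbour $z$ of $w$ (including the one created by the last flip), restoring the invariant. Symmetrically, $\mathtt{delete}(u,v)$ removes the $\alpha$ copies one at a time, calling $\mathtt{decrement}$ on the head of each removed copy; this can destabilize only an out-edge of that head, and only a tight one, so I trace a \emph{maximal tight chain from} it using the $O(1)$ primitive $\mathtt{tight\_out\_nbr}$, flip that chain, and call $\mathtt{decrement}$ on its last vertex $w$, leaving all other in-degrees fixed and, by the same reasoning, preserving the invariant; the vertices whose loads drop (the various $w$) are reported, which is what the enclosing algorithm needs to keep its density estimate valid.

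For the running times it suffices to analyse one update. Along a tight chain the relevant labels change monotonically by $\Omega(\eta)$ per step and lie in $[0,2\rho^{\textsf{est}}]$, so every maximal tight chain has $O(\rho^{\textsf{est}}/\eta)=O(\alpha)$ edges; hence (as in Section~\ref{subsec:intuition}) one $\mathtt{insert}$ or $\mathtt{delete}$ performs $O(\alpha)$ flips at $O(\log n)$ each and $O(\alpha)$ $\mathtt{increment}/\mathtt{decrement}$ calls, an $\mathtt{insert}$ additionally making $O(\alpha)$ calls to $\mathtt{tight\_in\_nbr}$ at $O(\alpha)$ each, while a $\mathtt{delete}$ traces its chain with the $O(1)$ primitive and is dominated by its flips. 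Collecting these yields $O(\alpha^2)$ for $\mathtt{insert}$, $O(\alpha\log n)$ for $\mathtt{delete}$, $O(1)$ for $\mathtt{query\_load}$ and $\mathtt{query\_density}$, and $O(\beta+\log n)$ for $\mathtt{query\_subgraph}$.

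The step I expect to be the main obstacle is proving that a chain flip --- and the whole sequence of flips triggered across the $\alpha$ copies --- restores \emph{global} $\eta$-stability, not merely the stability of the two edges it touches directly. This hinges on two things that need careful accounting: that interior chain vertices keep their in-degree exactly, so no violation can surface at them or at their neighbours; and that the gap between the tightness threshold $\eta/2$ and the stability threshold $\eta$ absorbs the two sources of imprecision in $\textsc{LazyDirectedLabels}$ --- the round-robin window limiting how many in-neighbours $\mathtt{tight\_in\_nbr}$ inspects, and the staleness of the out-neighbour labels that $\mathtt{tight\_out\_nbr}$ reads. Concretely, one must show that $\Theta(\eta)$ updates separate consecutive refreshes of any single stored label, so that a $\mathsf{null}$ return from either routine still certifies that no incident edge will be violated after the impending degree change. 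A lesser subtlety is that only the estimate $\rho^{\textsf{est}}\le\hat{\rho}_{\hat{G}}\le 2\rho^{\textsf{est}}$ is available, so the subgraph output must be justified via Corollary~\ref{cor:subgraph} (with its slightly larger constant) rather than Theorem~\ref{thm:lp-approx} directly.
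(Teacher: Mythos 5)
Your construction follows the paper's own route essentially step for step: maintain a locally $\eta$-stable orientation inside $\textsc{LazyDirectedLabels}$, handle an insertion (deletion) by flipping a maximal tight chain found via $\mathtt{tight\_in\_nbr}$ ($\mathtt{tight\_out\_nbr}$) and incrementing (decrementing) only the chain's last vertex, bound every chain by $O(\hat{\rho}_G/\eta)=O(\alpha)$ edges, and derive the query guarantees from Theorem~\ref{thm:lp-approx} and Corollary~\ref{cor:subgraph}, with the per-primitive costs of Lemma~\ref{lem:datastructure} giving the stated bounds. So the architecture and the accounting are the paper's.

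The genuine gap is exactly the point you defer as ``the main obstacle'': you never prove that the local gap constraint $d(v)\le d(u)+\eta$ on every oriented edge survives the laziness of the data structure, and this is the crux of correctness (without it the orientation is not a feasible solution of $\textsc{Dual}(G,\eta)$, so neither $\mathtt{query\_density}$ nor $\mathtt{query\_subgraph}$ is justified). Your endpoint argument --- ``since the chain was maximal, $w$ had no tight in-edge, so $\indeg(w)+1\le\indeg(z)+\eta$'' --- presumes exact information, but a $\mathsf{null}$ from $\mathtt{tight\_in\_nbr}(w)$ only certifies that none of the \emph{next} $4\indeg(w)/\eta$ in-neighbours (in round-robin order) is tight, and $\mathtt{tight\_out\_nbr}$ reads stale keys $d_u(v)$. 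The paper closes this with Lemma~\ref{lem:invariant}: each vertex refreshes every in-neighbour's stored copy of its label once per $\eta/4$ of its own label changes, so staleness is at most $\eta/4$; a decrement of $d(u)$ happens only when the stale gap is below $\eta/2$, hence the true gap is at most $3\eta/4$ at that moment, and at most $\eta/4$ further increments of $d(v)$ can occur before $u$ is next inspected, keeping $d(v)\le d(u)+\eta$ at all times. That quarter-window bookkeeping is the missing piece; sketching what ``must be shown'' does not discharge it. A secondary, fixable inconsistency: you fold the $\alpha$-fold edge duplication into $\textsc{Threshold}$'s own $\mathtt{insert}/\mathtt{delete}$ yet quote the lemma's bounds $O(\alpha^2)$ and $O(\alpha\log n)$, which are per \emph{single} edge copy; with $\alpha$ copies per call your operations cost an extra factor of $\alpha$ (the paper performs the duplication in the outer Algorithm~\ref{alg:main}, and the division of labels by $\alpha$ you introduce is likewise handled there).
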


\begin{algorithm}[!htb]
\label{alg:threshold}
\DontPrintSemicolon
\nlset{$\bullet$} Initialize data structure $\mathcal{L} \gets \textsc{LazyDirectedLabels}(G,\eta)$
with:\;
$G = (V, \emptyset)$, $\alpha \gets 64\log n \cdot \epsilon^{-2}$, $\eta \gets 2 \rho^{\textsf{est}} /\alpha$\;
\begin{minipage}{0.5\textwidth}
\SetKwProg{myalg}{Operation}{}{}
\SetKwFunction{insert}{insert}
\myalg{\insert{$(u,v)$}}{
	\If{$d(u) \geq d(v)$}{
		$\mathcal{L}.\mathtt{add}\left(\overrightarrow{uv}\right)$\;
		$w \gets v$
	}
	\Else{
		$\mathcal{L}.\mathtt{add}\left(\overrightarrow{vu}\right)$\;
		$w \gets u$\;
	}
	\While{$\mathcal{L}.\mathtt{tight\_in\_nbr}(w) \neq \mathsf{null}$}{
		$w' \gets \mathcal{L}.\mathtt{tight\_in\_nbr}(w)$\;
		$\mathcal{L}.\mathtt{flip}\left(\overrightarrow{w'w}\right)$\;
		$w \gets w'$
	}
	$\mathcal{L}.\mathtt{increment}(w)$\;
}{}
\BlankLine
\SetKwFunction{querysubgraph}{query\_subgraph}
\myalg{\querysubgraph{}}{
	$r \gets \sqrt{{2\eta\log n}/{\rho^{\textsf{est}}}}$\;
	\KwRet{$\mathcal{L}.\mathtt{maximal\_label\_set}(r)$};
}{}
\end{minipage}%
\begin{minipage}{0.5\textwidth}
	\SetKwProg{myalg}{Operation}{}{}
	\SetKwFunction{delete}{delete}
	\myalg{\delete{$(u,v)$}}{
		\If{$u \in \textsc{InNbrs}_v$}{
			$\mathcal{L}.\mathtt{remove}\left(\overrightarrow{uv}\right)$\;
			$w \gets v$
		}
		\Else{
			$\mathcal{L}.\mathtt{remove}\left(\overrightarrow{vu}\right)$\;
			$w \gets u$\;
		}
		\While{$\mathcal{L}.\mathtt{tight\_out\_nbr}(w) \neq \mathsf{null}$}{
			$w' \gets \mathcal{L}.\mathtt{tight\_out\_nbr}(w)$\;
			$\mathcal{L}.\mathtt{flip}\left(\overrightarrow{ww'}\right)$\;
			$w \gets w'$
		}
		$\mathcal{L}.\mathtt{decrement}(w)$\;
		$\KwRet(w)$\;
	}{}
	\BlankLine
	\SetKwFunction{querydensity}{query\_density}
	\myalg{\querydensity{}}{
		\KwRet{$\mathcal{L}.\mathtt{max\_label} \times (1-\epsilon)$};
	}{}
	\BlankLine
	\SetKwFunction{queryload}{query\_load}
	\myalg{\queryload{$u$}}{
	\KwRet{$\mathcal{L}.\mathtt{label}(u)$};
	}{}
\end{minipage}
	\caption{$\textsc{Threshold}(G, \rho^{\textsf{est}}, \epsilon)$: Update routines on $G$ when an estimate to its maximum load is known.
	Additionally $V$, $n = |V|$, and $\epsilon$ are known.}
\end{algorithm} 

Let us denote by $d_u(v)$, the apparent label of $v$ as seen by $u$.
This concept is needed because when the label of a vertex changes, it doesn't
relay this change to all its in-neighbors immediately.
However, we can claim the following:

\begin{lemma}\label{lem:invariant}
	The local gap constraint is always maintained, i.e., for any edge $\overrightarrow{uv}$,
	$
	d(v) \leq d(u) + \eta.
	$
\end{lemma}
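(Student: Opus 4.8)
The plan is to prove the invariant $d(v) \le d(u) + \eta$ for every directed edge $\overrightarrow{uv}$ by induction on the sequence of operations performed by $\textsc{Threshold}$, maintaining simultaneously the stronger bookkeeping fact that whenever $u$ has not yet been informed of $v$'s most recent label changes, the discrepancy between $d(v)$ and the apparent label $d_u(v)$ is small — specifically bounded by $\eta/4$, since $\mathtt{increment}$ and $\mathtt{decrement}$ each refresh a $4\indeg(u)/\eta$ fraction of the in-neighbor list in round-robin order, so every in-neighbor is refreshed within $\eta/4$ consecutive degree changes of $u$. I would first state this two-part invariant: (a) for all $\overrightarrow{uv}$, $d(v) \le d(u) + \eta$; and (b) for all $\overrightarrow{uv}$, $|d(v) - d_u(v)| \le \eta/4$, or more precisely $d_u(v) \le d(v) \le d_u(v) + (\text{number of updates to } v \text{ since last refresh})$ with the refresh happening every $\eta/4$ updates.

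The induction is then a case analysis over the operations. For $\mathtt{add}(\overrightarrow{uv})$: the edge is added only when $d(u) \ge d(v)$ (in $\mathtt{insert}$, the edge points from higher to lower label), so trivially $d(v) \le d(u) \le d(u) + \eta$; and $d_u(v)$ is set to $d(v)$ exactly, so (b) holds with zero slack. For $\mathtt{flip}(\overrightarrow{w'w})$ inside the insert loop: this is called only when $\overrightarrow{w'w}$ is tight as seen through $\mathtt{tight\_in\_nbr}$, meaning $d(w') \le d(w) - \eta/2$ in actual labels; after flipping we get edge $\overrightarrow{ww'}$, and since $d(w') \le d(w) - \eta/2 \le d(w) + \eta$, the new edge satisfies (a); moreover, all pre-existing edges $\overrightarrow{w w''}$ and $\overrightarrow{w'' w'}$ are unaffected since no labels change during a flip, and $\mathtt{add}$ inside $\mathtt{flip}$ re-establishes (b) for the new edge with exact value. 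The crucial case is $\mathtt{increment}(w)$: here $d(w)$ rises by $1$, which could only violate (a) for edges of the form $\overrightarrow{w z}$ (incoming to $w$ would now read $d(w)$ larger — wait, we need $d(z) \le d(w) + \eta$ for $\overrightarrow{wz}$, and raising $d(w)$ only helps; the danger is edges $\overrightarrow{z w}$ where we need $d(w) \le d(z) + \eta$). So I must argue that just before $\mathtt{increment}(w)$ is called, $w$ has no tight incoming edge, i.e., every in-neighbor $z$ of $w$ has $d(z) > d(w) - \eta/2$ in true labels — but $\mathtt{tight\_in\_nbr}$ only checked the $4\indeg(w)/\eta$ neighbors it refreshed, using possibly stale labels $d_w(z)$. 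This is where invariant (b) does the work: if $d_w(z) > d(w) - \eta/2$ for a checked neighbor, then $d(z) \ge d_w(z) > d(w) - \eta/2$; but I also need this for \emph{unchecked} neighbors, which requires that the while-loop in $\mathtt{insert}$ ran until $\mathtt{tight\_in\_nbr}$ returned $\mathsf{null}$ across \emph{all} calls — each call checks a fresh batch — so over the course of the loop, combined with the round-robin refresh, within $\eta/4$ iterations every in-neighbor gets both refreshed and checked. The key inequality is: after the loop ends, for every in-neighbor $z$ of $w$, $d_w(z) \ge d(w) - \eta/2 + 1 > d(w) - \eta/2$, hence $d(z) > d(w) - \eta/2 \ge d(w) + 1 - \eta/2 - 1$, and since after incrementing $d(w)$ becomes $d(w)+1$, we get $d(z) > (d(w)+1) - \eta/2 - 1 \ge (d(w)+1) - \eta$, so (a) survives with room to spare. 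I would also need to handle the symmetric $\mathtt{decrement}$ / $\mathtt{tight\_out\_nbr}$ case for $\mathtt{delete}$, which is analogous with the roles of $\indeg$ and stale labels swapped.

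The main obstacle I anticipate is precisely the interaction between the laziness (stale labels $d_u(v)$) and the termination condition of the while-loops: I must carefully track, for each in-neighbor, how many of $w$'s degree changes have occurred since that neighbor was last refreshed, and confirm the round-robin guarantees a refresh within $\eta/4$ such changes — then argue that within a single $\mathtt{insert}$ call the loop cannot run more than some bounded number of iterations without re-examining (via a new $\mathtt{tight\_in\_nbr}$ batch) every in-neighbor of the final vertex $w$. If a single insert could touch the same vertex $w$ as the chain endpoint many times, or if the chain is long, I need the refresh-frequency bound to be in terms of \emph{total} degree changes to $w$, not per-call. I would set up the invariant so that the slack $d(v) - d_u(v)$ is bounded by the number of un-relayed updates, show each $\mathtt{increment}/\mathtt{decrement}$ adds at most $1$ to this count for $\eta/4$ of the in-neighbors while zeroing it for the other (freshly refreshed) batch, and conclude the count never exceeds $\eta/4 < \eta/2$. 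With this in hand, the tightness check using stale labels is off from the true check by at most $\eta/4$, which is exactly the slack budgeted when we relaxed the tightness threshold from $0$ to $\eta/2$ — so a non-tight-looking edge is genuinely within $\eta$ in true labels. Everything else (the $\mathtt{add}$, $\mathtt{remove}$, $\mathtt{flip}$ cases) is routine label-bookkeeping.
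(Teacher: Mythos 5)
Your overall plan (induction over operations, with the round-robin staleness bound of $\eta/4$ and the non-tightness guards on $\mathtt{increment}$/$\mathtt{decrement}$) contains the right ingredients, but the crucial case --- an increment of $d(w)$ threatening edges $\overrightarrow{zw}$ --- is not actually closed, and it rests on two misreadings of the data structure. First, $\mathtt{tight\_in\_nbr}(w)$ reads the \emph{true} labels $d(z)$ of the in-neighbors in its batch; staleness only enters through the priority-queue keys $d_z(w)$ used by $\mathtt{tight\_out\_nbr}$ (and your ``more precise'' form of invariant (b), $d_u(v)\le d(v)$, is false anyway, since labels also decrease). So the danger on the increment side is not stale labels at all, it is \emph{coverage}: a single $\mathtt{insert}$ call makes only one $\mathtt{tight\_in\_nbr}(w)$ call at the chain endpoint, which inspects only $4\indeg(w)/\eta$ in-neighbors. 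Your ``key inequality'' --- that after the while-loop every in-neighbor $z$ of $w$ satisfies $d(z) > d(w)-\eta/2$ --- is therefore unjustified; you flag this yourself as the main obstacle, but the resolution you then propose (the counter bounding $d(v)-d_u(v)$ by $\eta/4$) only controls label staleness, which is the $\mathtt{tight\_out\_nbr}$/decrement mechanism, and says nothing about in-neighbors that no recent batch has examined.

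What is missing is the cross-operation accounting that the paper's proof supplies: fix an edge $\overrightarrow{uv}$ and argue temporally. At the last decrement of $d(u)$, the guard $\mathtt{tight\_out\_nbr}(u)=\mathsf{null}$ together with $|d_u(v)-d(v)|\le\eta/4$ gives $d(v)\le d(u)+3\eta/4$; after that moment the slack can only grow through increments of $d(v)$, and because the round-robin pointer of $\mathtt{tight\_in\_nbr}(v)$ persists across operations, fewer than $\eta/4$ such increments can occur before $v$ examines $u$ with its true label, at which point the edge is either flipped or the slack is re-established below $\eta/2$; hence the slack never exceeds $3\eta/4+\eta/4=\eta$. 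Note the budget $\eta/2+\eta/4+\eta/4=\eta$ is exactly tight, so the argument really does require tracking, per edge, the number of head-increments since the tail was last checked (not per call), and separately handling intermediate decrements of the tail via the stale-key guard. Your proposal gestures at both pieces but never assembles them, and as written the increment case would not go through.
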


\begin{proof}
There are two ways that this invariant could become unsatisfied: via a decrement to $u$
or an increment to $v$.

Recall that $v$ informs each in-neighbor its label once every $\eta/4$ updates,
hence $|d_u(v) - d(v)|$ cannot be larger than $\eta/4$.
$u$ only decrements if it cannot find a tight out-neighbor, which means that
$d_u(v) < d(u) + \eta/2$.
Hence, at any instant that $d(u)$ is decremented, $d(v) \leq d(u) + 3\eta/4$.

On the other hand, $d(v)$ is only incremented if $v$ cannot find a tight in-neighbor.
Consider the last time that $d(u)$ is decremented before this instant.
At this point, $d(v) \leq d(u) + 3\eta/4$.
After this, there can only be less than $\eta/4$ increments of $d(v)$ before it queries $d(u)$ and flips.
Hence, $d(v) \leq d(u) + \eta$.
\end{proof}

Using Lemma~\ref{lem:invariant} and Corollary~\ref{cor:subgraph},
we get the following corollary, which shows the correctness of Lemma~\ref{lem:threshold}.

\begin{corollary} \label{cor:correctness_threshold}
Let $\rho_{\mathsf{out}} = (1-\epsilon)\max_{v \in V} d(v)$.
Then, $(1-\epsilon)\rho_G^* \leq \rho_{\mathsf{out}} \leq \rho_G^*$.
\end{corollary}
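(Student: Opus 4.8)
The plan is to exhibit the state maintained by $\textsc{Threshold}$ as a feasible solution of $\textsc{Dual}(G,\eta)$ whose maximum vertex load is exactly $\max_{v\in V} d(v)$, and then to read off both inequalities of the corollary directly from Theorem~\ref{thm:lp-approx}, using the value $\eta = 2\rho^{\textsf{est}}/\alpha$ with $\alpha = 64\log n \cdot \epsilon^{-2}$ fixed in Section~\ref{subsec:orientation}. Throughout, $G$ denotes the graph obtained by duplicating each edge $\alpha$ times; its vertex set, and hence $n$, is unchanged.

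First I would check that the current orientation induces a feasible dual solution. For each edge $e$ of $G$, put $\hat f_e(v) = 1$ if $e$ is oriented towards $v$ and $\hat f_e(v) = 0$ otherwise, and set $\hat\ell_v := \sum_{e \ni v} \hat f_e(v)$. Then $\hat\ell_v = \indeg(v) = d(v)$, where the second equality is the one-line invariant that, after each completed $\mathtt{insert}$ or $\mathtt{delete}$, $d(\cdot)$ equals the true in-degree: every $\mathtt{flip}$ along the maximal tight chain is in-degree-neutral except at the single endpoint that is subsequently $\mathtt{increment}$ed (resp.\ $\mathtt{decrement}$ed), which is exactly where $d$ is adjusted. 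The constraints $\hat f_e(u) + \hat f_e(v) = 1$ and $\hat f \ge 0$ are immediate, and the slack constraint ``$\hat\ell_u \le \hat\ell_v + \eta$ whenever $\hat f_e(u) > 0$'' is precisely Lemma~\ref{lem:invariant} applied to the edge oriented $\overrightarrow{vu}$. Hence $(\hat f,\hat\ell)$ is feasible for $\textsc{Dual}(G,\eta)$ with $\hat\rho_G := \max_{v}\hat\ell_v = \max_v d(v)$, so that $\rho_{\mathsf{out}} = (1-\epsilon)\hat\rho_G$.

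For the lower bound I would invoke the right inequality of Theorem~\ref{thm:lp-approx} (equivalently weak duality, \eqref{eqn:duality}, since $\hat f$ together with $D = \hat\rho_G$ is feasible for $\textsc{Dual}(G)$): $\rho_G^* \le \hat\rho_G$, whence $(1-\epsilon)\rho_G^* \le (1-\epsilon)\hat\rho_G = \rho_{\mathsf{out}}$. For the upper bound, the left inequality of Theorem~\ref{thm:lp-approx} gives $\bigl(1 - 3\sqrt{\eta\log n/\hat\rho_G}\bigr)\hat\rho_G \le \rho_G^*$; substituting $\eta = 2\rho^{\textsf{est}}/\alpha$ and $\alpha = 64\log n \cdot \epsilon^{-2}$ gives $\eta\log n = \rho^{\textsf{est}}\epsilon^2/32$, and since $\rho^{\textsf{est}} \le \hat\rho_G$ this yields $3\sqrt{\eta\log n/\hat\rho_G} \le 3\epsilon/\sqrt{32} < \epsilon$. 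Therefore $\rho_{\mathsf{out}} = (1-\epsilon)\hat\rho_G \le \bigl(1 - 3\sqrt{\eta\log n/\hat\rho_G}\bigr)\hat\rho_G \le \rho_G^*$.

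I do not expect a genuine obstacle: the hard part --- the local-to-global density argument --- is already packaged in Theorem~\ref{thm:lp-approx}, and the stability guarantee in Lemma~\ref{lem:invariant}. The only points that need any care are the chain of substitutions $\eta \mapsto \alpha \mapsto \epsilon$ together with the inequality $\rho^{\textsf{est}} \le \hat\rho_G$ (needed so that the error term falls below $\epsilon$), and the bookkeeping claim that $d(\cdot)$ coincides with the true in-degree once an update has finished; both are routine.
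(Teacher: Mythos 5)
Your proposal is correct and follows essentially the same route as the paper: the orientation maintained by $\textsc{Threshold}$ (labels equal to in-degrees) together with Lemma~\ref{lem:invariant} is exactly the feasible $\textsc{Dual}(G,\eta)$ solution to which the local-to-global bound (Theorem~\ref{thm:lp-approx}, or its variant Corollary~\ref{cor:subgraph}) is applied, and your substitution $\eta = 2\rho^{\textsf{est}}/\alpha$ with $\rho^{\textsf{est}} \le \hat\rho_G$ gives error $3\epsilon/\sqrt{32} < \epsilon$, matching the paper's intended argument. The only difference is cosmetic: the paper cites Corollary~\ref{cor:subgraph} (constant $4$, normalized by $\rho^{\textsf{est}}$) while you invoke Theorem~\ref{thm:lp-approx} directly, and you usefully spell out the in-degree bookkeeping that the paper leaves implicit.
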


\begin{proof}[Proof of Lemma~\ref{lem:threshold}]
Corollary~\ref{cor:correctness_threshold} gives the correctness proof.
It remains to show the time bounds.
Note that in both $\mathtt{insert}$ and $\mathtt{delete}$ operations,
the maximum chain of tight edges can only be of length at most $2\hat{\rho}_G/\eta = O(\alpha)$.
The $\mathtt{insert}$ operation calls $\mathtt{add}$ and $\mathtt{increment}$ once, $\mathtt{flip}$ and $\mathtt{tight\_in\_nbr}$  $O(\alpha)$ times. From Lemma~\ref{lem:datastructure},
this results in a worst-case runtime of $O(\alpha^2)$ per insertion.
The $\mathtt{delete}$ operation calls $\mathtt{remove}$ and $\mathtt{decrement}$ once, $\mathtt{flip}$ and $\mathtt{tight\_out\_nbr}$  $O(\alpha)$ times. From Lemma~\ref{lem:datastructure},
this results in a worst-case runtime of $O(\alpha\cdot \log n)$ per deletion.
$\mathtt{query\_density}$ only needs one $\mathtt{max\_label}$ call which is $O(1)$ worst-case.
$\mathtt{query\_load}$ also needs one $\mathtt{label}$ call which is $O(1)$ worst-case.
Lastly, $\mathtt{query\_subgraph}$'s runtime follows from Lemma~\ref{lem:datastructure}.
\end{proof}

\subsection{Overall algorithm}
\label{subsec:overallalgo}

Now, we have a sufficient basis to show our main theorem, which we restate:
\dynamic*

\begin{algorithm}[!htb]
	\label{alg:main}
	\DontPrintSemicolon
	\nlset{$\bullet$} \For{$i \gets 1$ to $\log_2 n$}{
		$\alpha \gets 64 \log n \cdot \epsilon^{-2}$;
		$\rho^{\textsf{est}}_i \gets 2^{i-2} \alpha$\;
		Initialize $\mathcal{T}_i \gets \textsc{Threshold}(G, \rho^{\textsf{est}}_i, \epsilon)$\;
		Initialize a sorted list of edges $\mathtt{pending}_i \gets \emptyset$ using two balanced BSTs (one sorted using the first vertex of the edge, and another using the second)\;
		Set $\mathtt{active} \gets 0$
	}
	\BlankLine
	\BlankLine
	\SetKwProg{myalg}{Operation}{}{}
	\SetKwFunction{query}{query}
	\myalg{\query{}}{
		$\KwRet$ $\mathcal{T}_{\mathtt{active}}.\mathtt{query}()$\;
	}{}
	\BlankLine
	\SetKwFunction{querysubgraph}{query\_subgraph}
	\myalg{\querysubgraph{}}{
		$\KwRet$ $\mathcal{T}_{\mathtt{active}}.\mathtt{query\_subgraph}()$\;
	}{}
	\BlankLine
	\SetKwFunction{insert}{insert}
	\myalg{\insert{$(u,v)$}}{
		\For(\tcp*[f]{duplicating $(u,v)$ $\alpha$ times}){$k \gets 1$ to $\alpha$}{ 
			\lFor(\tcp*[f]{affordable copies}){$i \gets \log_2 n$ to $\mathtt{active}+1$}{
				$\mathcal{T}_{i}.\mathtt{insert}((u,v))$
			}
			$\rho \gets \mathcal{T}_{\mathtt{active}+1}.\mathtt{query}()$\;
			\lIf{$\rho \geq 2\rho^{\textsf{est}}_{\mathtt{active}}$}{
				$\mathtt{active} \gets \mathtt{active}+1$
			}
			\lElse{
				$\mathcal{T}_{\mathtt{active}}.\mathtt{insert}((u,v))$
			}
			
			\For(\tcp*[f]{unaffordable copies}){$i \gets \mathtt{active}-1$ to $1$}{			
				$\ell_u \gets \mathcal{T}_{i}.{\mathtt{query\_load}}(u)$;
				$\ell_v \gets \mathcal{T}_{i}.{\mathtt{query\_load}}(v)$\;
				\lIf{both $\ell_u, \ell_v \geq 2\rho^{\textsf{est}}$}{
					add $(u,v)$ to $\mathtt{pending}_i$
				}
				\lElse(\tcp*[f]{edge is still insertable}){
					$\mathcal{T}_{i}.\mathtt{insert}((u,v))$
				}
			}
		}
	}{}
	\BlankLine
	\SetKwFunction{delete}{delete}
	\myalg{\delete{$(u,v)$}}{
		\For(\tcp*[f]{duplicating $(u,v)$ $\alpha$ times}){$k \gets 1$ to $\alpha$}{ 
			\lFor(\tcp*[f]{affordable copies}){$i \gets \log_2 n$ to $\mathtt{active}+1$}{
				$\mathcal{T}_{i}.\mathtt{delete}((u,v))$
			}
			$\rho \gets \mathcal{T}_{\mathtt{active}}.\mathtt{query}()$\;
			\If{$\rho < \rho^{\textsf{est}}$}{
				$\mathtt{active} \gets \mathtt{active}-1$\;
				$\mathcal{T}_{\mathtt{active}}.\mathtt{delete}((u,v))$\;
			}
			
			\For(\tcp*[f]{unaffordable copies}){$i \gets \mathtt{active}-1$ to $1$}{
				\lIf{$(u,v) \in \mathtt{pending}_i$}{
					remove one copy of $(u,v)$ from $\mathtt{pending}_i$
				}
				\Else{
					$w \gets \mathcal{T}_{i}.\mathtt{delete}((u,v))$
					\tcp*[r]{$w$'s load was decremented}
					\If{$(w,w') \in \mathtt{pending}_i$ for any $w'$}{
						$\mathcal{T}_{i}.\mathtt{insert}((w,w'))$\;
						Remove $(w,w')$ from $\mathtt{pending}_i$\;
					}
				}
			}
		}
	}{}
	\caption{Main update algorithm. $V$, $n=|V|$, and $\epsilon$ are known quantities.}
\end{algorithm} 

From Section~\ref{subsec:threshold},
we now have an efficient fully dynamic data structure $\textsc{Threshold}(G,\rho^{\textsf{est}}, \epsilon)$
to maintain a $1-\epsilon$ approximation to the
maximum subgraph density, provided the optimum remains within
a constant factor of some estimate $\rho^{\textsf{est}}$.
Particularly, $\textsc{Threshold}(G,\rho^{\textsf{est}}, \epsilon)$ requires $\hat{\rho}_G/\eta$ to be small
to work efficiently.
On the other hand, too small an $\eta$ results in a bad approximation factor.

To ensure that we always work with the right estimate $\rho^{\textsf{est}}$,
we will construct $\log_2 n$ copies of $\textsc{Threshold}$,
one copy for each possible $\rho^{\textsf{est}}$, or
equivalently each possible value of $\eta$.
In the $i$th copy of the data structure,
we set $\rho^{\textsf{est}}_i \gets 2^{i-2}\alpha$,
and so $\eta_i \gets 2^{i-1}$.
Let us call this $i$th copy of the data structure as $\mathcal{T}_i \gets \textsc{Threshold}(G,\rho^{\textsf{est}}_i, \epsilon)$.
We also define $\eta_0 = 0$ for the sake of the empty graph.

We say that $\mathcal{T}_i$ is \emph{accurate} if $\rho^{\textsf{est}}_i \leq \hat{\rho}_G$,
or equivalently
$
\eta_i \leq {2\hat{\rho}_G}/{\alpha}.
$
Note that we will never use a copy that is not accurate to deduce the approximate
solution.
On the other hand, we say that $\mathcal{T}_i$ is \emph{affordable} if
the maximum possible chain length is less than $2\alpha$, i.e.,
$
\eta_i > {\hat{\rho}_G}/{\alpha}.
$
On copies that are not affordable, if there are any additions which can cause the maximum load \emph{in that copy} to increase, we hold these off until a later time.

Lastly, note that for any value of $\hat{\rho}_G$, there is exactly one copy which is both accurate and affordable.
We call this the \emph{active} copy. The solution is extracted at any point from this copy.
Suppose the index of the current active copy is $i$. Then, after an insertion,
this can be either $i$ or $i+1$. We first test this by querying the maximum density in $\mathcal{T}_{i+1}$, and accordingly update the active index.
Similarly, after a deletion, this can be $i$ or $i-1$.
For insertions which are not affordable, we store the edges in a \texttt{pending} list.
Consider an insertion $(u,v)$ which is not affordable in $\mathcal{T}_i$.
This means that the loads on both $u$ and $v$ are at the limit $(\eta_i \alpha)$.
We save $(u,v)$ in the pending list.
For $\mathcal{T}_i$ to become affordable, one of $u$'s or $v$'s load must decrease.
At this point, we insert $(u,v)$.
The pseudocode for the overall algorithm is in Algorithm~\ref{alg:main}.

Notice, importantly, that insertions are made into $\mathcal{T}_i$
only when it is affordable. However, we always allow deletions
because these are either deletions from the pending edges
or from the graph currently stored in $\mathcal{T}_i$,
which is still affordable.

\begin{proof}[Proof of Theorem~\ref{thm:dynamic}]
	To show the correctness of Algorithm~\ref{alg:main},
	we need to prove that at all times,
	$
	\hat{\rho}_G/2 \leq \rho^{\textsf{est}}_{\mathtt{active}} < \hat{\rho}_G.
	$ 
	We know that this is true at the start of the algorithm.
	Assume this property is true at some instant before an update.
	When an edge is inserted, the first inequality might break.
	So, we test this after every addition and increment $\mathtt{active}$ accordingly.
	The argument follows similarly for deletions.
	However, we also need to make sure that when some $\mathcal{T}_i$
	is queried, there are no edges remaining in $\mathtt{pending}_i$,
	otherwise the queried density could possibly be incorrect.
	Consider an edge $(u,v)$ inserted into $\mathtt{pending}_i$ at some point
	during the algorithm.
	For $\mathcal{T}_i$ to be queried, it must be active, which means
	that at some point, the load of either $u$ or $v$ decreased,
	causing $(u,v)$ to be inserted. Even when there are multiple such edges
	adjacent to the same high-load vertex, we are assured to see
	at least that many decrements at that vertex.
	
	From Lemma~\ref{lem:threshold}, it follows that a query takes $O(1)$
	worst-case time,
	and finding the subgraph takes $O(\beta + \log n)$ time,
	where $\beta$ is the size of the output subgraph.
	Each insert or delete operation is first duplicated $\alpha$ times.
	Secondly, the updates are made individually in $\log_2 n$ copies of
	the data structure.
	
	First, note that any insert or delete operation
	in $\mathtt{pending}$ can be processed in $O(\log n)$ time.
	This is also true for searching using a single end point of an edge
	owing to the manner in which $\mathtt{pending}$ is defined.
	
	When an edge is added, it makes two load queries
	and then possibly inserts in $\mathcal{T}_i$.
	From Lemma~\ref{lem:threshold},
	this gives a worst-case runtime of $O(\alpha^3\log n)$ time per insertion.
	
	As for deleting an edge,
	it sometimes also requires an insertion into $\mathcal{T}_i$.
	Again, plugging in runtimes from Lemma~\ref{lem:threshold}
	gives a worst-case runtime of $O(\alpha^3\log n)$ time per deletion.
\end{proof}


\section{Vertex-weighted Densest Subgraph} \label{sec:vertexweighted}

In this section, we extend the ideas from Section~\ref{sec:dynamic} to extend
to graphs with vertex weights.
As we will see in Section~\ref{sec:directed},
this extension is crucial in arriving at efficient
dynamic algorithms for DSP on directed graphs.

Let us first formally define the concept
of density in vertex-weighted graphs.
Given a graph $G = \langle V,E,w \rangle$,
where $w : V \mapsto \mathbb{Q}^{\geq 1}$,
the density of a subgraph induced by a vertex subset $S \subseteq V$ is
\[
\rho_G(S) \defeq \dfrac{|E(S)|}{\sum_{v \in S} \omega(v)}.
\]
For ease of notation we denote $\omega(S) \defeq \sum_{v \in S} \omega(v)$.
Constructing the approximate dual like in Sections~\ref{sec:prelims}
and~\ref{sec:dynamic},
we get the same conditions except the \emph{load} on a vertex $v$ is now defined as
\[
\ell_v = \frac{1}{\omega(v)} \sum_{e \ni v} f_e(v).
\]

Let $\omega_{\min}$ and $\omega_{\max}$ denote the smallest and largest vertex weight in $G$.
We multiply all the weights by $1/\omega_{\min}$ and later divide the answer by the same amount. This ensures that all weights are at least $1$, and the maximum
weight is now given by $W \defeq \omega_{\max}/\omega_{\min}$.

We first show that local approximations also suffice for vertex-weighted DSP.
We reuse the notation used in Section~\ref{sec:dynamic} for the exact and approximate dual LP -- $\textsc{Dual}(G)$ and $\textsc{Dual}(G,\eta)$,
but with vertex weights included.

\begin{theorem} \label{thm:lp-approx-vertex-weighted}
	Given an undirected vertex-weighted graph $G$ with $n$ vertices,
	with maximum vertex weight $W$,
	let $\hat{f}, \hat{\ell}$ denote any feasible solution to $\textsc{Dual}(G, \eta)$,
	and let $\hat{\rho}_G \defeq \max_{v \in V} \hat{\ell}_v$.
	Then,
	\[
	\left(1 - 3\sqrt {\dfrac{\eta\log (nW)}{\hat{\rho}_G}} \right) \cdot {\hat{\rho}_G}
	\leq
	\rho_G^*
	\leq
	\hat{\rho}_G.
	\]
\end{theorem}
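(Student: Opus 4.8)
The plan is to mimic the proof of Theorem~\ref{thm:lp-approx} verbatim, replacing the \emph{cardinality} of each vertex set by its \emph{total weight} wherever the counting argument enters. As before, any feasible solution of $\textsc{Dual}(G,\eta)$ is also feasible for $\textsc{Dual}(G)$, so $\rho_G^* \le \hat\rho_G$, and all the work is in the lower bound. Throughout I use the normalization already set up above, so that $\omega(v) \ge 1$ for every $v$ and $\omega(V) \le nW$; note this rescaling multiplies every density and every load by a common factor and hence does not affect the multiplicative approximation guarantee.

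First I would define the superlevel sets $T_i \defeq \{v \in V \mid \hat{\ell}_v \ge \hat\rho_G - \eta i\}$ exactly as in the unweighted proof. Since a vertex attaining the maximum load lies in $T_0$, we have $\omega(T_0) \ge 1$, and $T_0 \subseteq T_1 \subseteq \cdots$. Fix an adjustable parameter $0 < r < 1$ and let $k$ be the largest integer such that $\omega(T_i) \ge (1+r)\,\omega(T_{i-1})$ for every $1 \le i \le k$ (this exists because $\omega(T_i) \le \omega(V) \le nW$ cannot grow geometrically forever). Maximality gives $\omega(T_{k+1}) < (1+r)\,\omega(T_k)$, and chaining the growth inequalities for $i = 1,\dots,k$ together with $\omega(T_0) \ge 1$ yields $nW \ge \omega(T_k) \ge (1+r)^k$, so $k \le \log_{1+r}(nW) \le 2\log(nW)/r$.

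Next I would lower-bound $|E(T_{k+1})|$ through the weighted load. For $u \in T_k$ the identity now reads $\omega(u)\,\hat{\ell}_u = \sum_{uv \in E} \hat{f}_{uv}(u)$, and the local constraint $\hat{f}_{uv}(u) > 0 \Rightarrow \hat{\ell}_v \ge \hat{\ell}_u - \eta \ge \hat\rho_G - \eta(k+1)$ forces every contributing endpoint $v$ into $T_{k+1}$. Summing over $u \in T_k \subseteq T_{k+1}$ and using $\hat{f}_e(u) + \hat{f}_e(v) = 1$ on each edge inside $T_{k+1}$ gives $|E(T_{k+1})| \ge \sum_{u \in T_k} \omega(u)\,\hat{\ell}_u \ge (\hat\rho_G - \eta k)\,\omega(T_k)$. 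Dividing by $\omega(T_{k+1})$ and using $\omega(T_k)/\omega(T_{k+1}) > 1/(1+r) \ge 1-r$,
\[
\rho_G(T_{k+1}) = \frac{|E(T_{k+1})|}{\omega(T_{k+1})} \ge (1-r)(\hat\rho_G - \eta k) \ge (1-r)\,\hat\rho_G\left(1 - \frac{2\eta\log(nW)}{r\,\hat\rho_G}\right).
\]
Setting $r = \sqrt{2\eta\log(nW)/\hat\rho_G}$ to balance the two factors gives $\rho_G(T_{k+1}) \ge \hat\rho_G\bigl(1 - \sqrt{2\eta\log(nW)/\hat\rho_G}\bigr)^2 \ge \hat\rho_G\bigl(1 - 3\sqrt{\eta\log(nW)/\hat\rho_G}\bigr)$, and since $\rho_G(T_{k+1}) \le \rho_G^*$ the theorem follows.

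There is no conceptual obstacle here; the only thing that requires care is doing the bookkeeping with the measure $\omega(\cdot)$ \emph{consistently} — in particular defining the index $k$ via geometric growth of $\omega(T_i)$ rather than $|T_i|$, and noticing that the single place where $W$ enters is the crude bound $\omega(V)/\omega(T_0) \le nW$, which is exactly what upgrades the $\log n$ of Theorem~\ref{thm:lp-approx} into $\log(nW)$.
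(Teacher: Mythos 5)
Your proposal is correct and follows the paper's own proof essentially verbatim: the same superlevel sets $T_i$, the same geometric-growth definition of $k$ measured in total weight $\omega(T_i)$, the same weighted load identity $\omega(u)\hat{\ell}_u = \sum_{uv\in E}\hat f_{uv}(u)$ bounding $|E(T_{k+1})|$, and the same balancing choice of the parameter, with $\log(nW)$ entering exactly through $\omega(T_k)\le nW$ (you are only slightly more explicit than the paper in noting $\omega(T_0)\ge 1$ after the normalization). No substantive differences to report.
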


\begin{proof}
	The proof follows the proof of Theorem~\ref{thm:lp-approx} almost identically.
	
	Any feasible solution of $\textsc{Dual}(G, \eta)$ is also a feasible
	solution of ${\textsc{Dual}}(G)$, and so we have $\rho_G^* \leq \hat{\rho}_G$.
	
	Denote by $T_i$ the set of vertices with load at least $\hat{\rho}_G - \eta i$, i.e.,
	$
	T_i \defeq \left\{ v \in V \mid \hat{\ell}_v \geq \hat{\rho}_G - \eta i \right\}.
	$
	Let $0 < \alpha < 1$ be some adjustable parameter we will fix later.
	We define $k$ to be the maximal integer such that for any $1 \leq i \leq k$,
	$
	\omega(T_i) \geq \omega(T_{i-1}) \cdot (1+\alpha).
	$
	Note that such a maximal integer $k$ always exists because there are finite number of vertices in $G$ and the size of $T_i$ grows exponentially. By the maximality of $k$,
	$
	\omega(T_{k+1}) < \omega(T_k) \cdot (1+\alpha).
	$
	In order to bound the density of this set $T_{k+1}$, we compute the total
	load on all vertices in $T_{k}$.
	For any $u \in T_k$,
	the load on $u$ is given by
	\[
	\hat{\ell}_u = \dfrac{1}{\omega(u)}\sum_{uv \in E} \hat{f}_{uv}(u)
	\]
	However, we know that
	\[
	f_{uv}(u) > 0 \implies \hat{\ell}_v \geq \hat{\ell}_u-\eta
	\]
	and hence we only need to count for $v \in T_{k+1}$.
	Summing over all vertices in $T_{k+1}$, we get
	\[
	\sum_{u \in T_k} \omega(u)\hat{\ell}_u = \sum_{u \in T_k, v \in T_{k+1}} \hat{f}_{uv}(u)  \leq \sum_{u \in T_{k+1}, v \in T_{k+1}} \hat{f}_{uv}(u) = |E(T_{k+1})|.
	\]
	Consider the density of set $T_{k+1}$,
	\[
	\rho(T_{k+1}) = \dfrac{|E(T_{k+1})|}{\omega(T_{k+1})} \geq \dfrac{\sum_{u \in T_k} \hat{\ell}_u}{\omega(T_{k+1})} \geq \dfrac{\omega(T_k)\cdot (\hat{\rho}_G - \eta k)}{\omega(T_{k+1})},
	\]
	where the last inequality follows from the definition of $T_k$.
	
	Since $\rho(T_{k+1})$ can be at most the maximum subgraph density $\rho_G^*$,
	and using the fact that $\omega(T_k)/\omega(T_{k+1}) > 1/(1+\alpha) \geq 1-\alpha$,
	\begin{align*}
	\rho_G^* \geq (1-\alpha)(\hat{\rho}_G - \eta k) \geq \hat{\rho}_G (1-\alpha)\left( 1 - \dfrac{2\eta\log (nW)}{\alpha \cdot \hat{\rho}_G} \right),
	\end{align*}
	where the last inequality comes from the fact that 
	$nW \geq \omega(T_{k}) \geq (1+\alpha)^k$, which implies that
	$k \leq \log_{1+\alpha} (nW) \leq 2 \log (nW) / \alpha$.
	
	Now, we can set our parameter $\alpha$ to maximize the term on the RHS.
	By symmetry, the maximum is achieved when both terms in the product are equal
	and hence we set
	\[
	\alpha = \sqrt{\dfrac{2 \eta\log (nW)}{\hat{\rho}_G}}.
	\]
	This gives
	\[
	\rho_G^* \geq \hat{\rho}_G \cdot \left( 1 - \sqrt{\dfrac{2\eta \log (nW)}{\hat{\rho}_G}} \right)^2 \geq
	\hat{\rho}_G \cdot \left( 1 - 2 \sqrt{\dfrac{2\eta \log (nW)}{\hat{\rho}_G}} \right) \geq
	\hat{\rho}_G \cdot \left( 1 - 3 \sqrt{\dfrac{\eta\log (nW)}{\hat{\rho}_G}} \right). \qedhere
	\]
\end{proof}

Once again, scaling the graph up by a factor of $\alpha \defeq \dfrac{64 \log (nW)}{\epsilon^2}$,
we can frame the question as the following graph orientation problem:

\begin{tcolorbox}
	Given an undirected graph $G$ with vertex-weights $w : V \mapsto \mathbb{Q}^+$ and a slack parameter $\eta$, we want to assign directions to edges in such a way that for any edge $u \rightarrow v$,
	\[
	\dfrac{\indeg(v)}{\omega(v)} \leq \dfrac{\indeg(u)}{\omega(u)}+\eta.
	\]
\end{tcolorbox}

To adapt the data structure from Algorithm~\ref{alg:datastructure},
we only need to make the following change:
\begin{itemize}
	\item $\mathtt{increment}(u)$ and $\mathtt{decrement}(u)$
	no longer increment/decrement by $1$ but by $1/\omega(u)$.
	\item Each entry in the $\textsc{Labels}$ data structure is additionally appended with vertex weights - because instead of computing $|A|$ and $|B|$,
	we need to compute $\omega(A)$ and $\omega(B)$ in $\mathtt{maximal\_label\_set}$.
	\item Since we assumed that $\omega(v) \geq 1$ for all $v \in V$,
	we do not have to adjust the conditions for tight edges.
\end{itemize}

Once we are provided with an estimate of $\hat{\rho}_G$,
we can use the data structure from Algorithm~\ref{alg:threshold}
without any changes.
Similar to Section~\ref{subsec:overallalgo},
we now need to guess a value for $\hat{\rho}_G$.
Notice that the range of values can now be $O(nW)$.
Hence, using $O(\log (nW))$ values,
we can apply Algorithm~\ref{alg:main}
to also solve the vertex-weighted version of DSP.

This gives us the following result.

\begin{theorem} \label{thm:vertexweighted}
	Given a vertex-weighted graph $G$ with $n$ vertices,
	and vertex-weights in the range $\omega_{\min}$ and $\omega_{\max}$,
	there exists a deterministic fully dynamic $(1-\epsilon)$-approximation algorithm
	for the densest subgraph problem on $G$ using $O(1)$ worst-case query time
	and worst-case update times of
	$O(\log^4 (nW) \cdot \epsilon^{-6} )$ per edge insertion or deletion.
	
	Moreover, at any point, the algorithm can output the corresponding
	approximate densest subgraph in time $O(\beta + \log n)$,
	where $\beta$ is the number of vertices in the output.
\end{theorem}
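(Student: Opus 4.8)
The plan is to replay the three-layer construction of Section~\ref{sec:dynamic} -- the labelled directed-graph data structure $\textsc{LazyDirectedLabels}$, the $\textsc{Threshold}$ wrapper for a fixed density estimate, and the outer Algorithm~\ref{alg:main} that keeps one copy per scale of $\hat{\rho}_G$ -- threading vertex weights through each layer exactly as the three bullet points above indicate, and verifying that no constant or running-time bound degrades. First I would settle the reduction to an orientation problem: after rescaling all weights by $1/\omega_{\min}$ so that $\omega(v)\ge 1$ and $\omega_{\max}$ becomes $W$, duplicating every edge $\alpha\defeq 64\log(nW)/\epsilon^{2}$ times, and taking $\eta = 2\rho^{\textsf{est}}/\alpha$ (a positive integer at every scale we actually use), Theorem~\ref{thm:lp-approx-vertex-weighted} together with the vertex-weighted analogue of Corollary~\ref{cor:subgraph} shows that any orientation satisfying $\indeg(v)/\omega(v)\le \indeg(u)/\omega(u)+\eta$ for every edge $\overrightarrow{uv}$ yields a subgraph of density at least $(1-\epsilon)\rho_G^{*}$, and that the set $T_{k+1}$ witnessing this can be read off from the multiset of weighted in-degrees.

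Next I would adapt the data-structure layer. In $\textsc{LazyDirectedLabels}$ the label $d(u)$ now represents the \emph{weighted} in-degree $\indeg(u)/\omega(u)$, so $\mathtt{increment}(u)$ and $\mathtt{decrement}(u)$ move $d(u)$ by $1/\omega(u)$ instead of $1$; each node of the $\textsc{Labels}$ BST stores its vertex weight, and the BST is augmented with subtree weight-sums so that $\mathtt{maximal\_label\_set}$ compares $\omega(A)$ and $\omega(B)$ (rather than $|A|$ and $|B|$) in $O(\log n)$ time while still returning $T_{k+1}$ in output-sensitive time $O(\beta+\log n)$. Because $\omega(v)\ge 1$, a single $\mathtt{increment}$ or $\mathtt{decrement}$ moves $d(u)$ by at most $1$, so the round-robin rule ``inform $4\indeg(u)/\eta$ in-neighbours per step'' still bounds the staleness $|d_u(v)-d(v)|$ by $\eta/4$ and the tight-edge slack $\eta/2$ is unchanged; hence the running times in Lemma~\ref{lem:datastructure} carry over verbatim. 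A maximal tight chain still has length $O(\hat{\rho}_G/\eta)=O(\alpha)$, since along it the labels drop by at least $\eta/2$ per step while remaining in $[0,\hat{\rho}_G]$, and this reasoning concerns only labels. Re-running the proof of Lemma~\ref{lem:threshold} -- correctness from the weighted analogue of Lemma~\ref{lem:invariant} (the same ``$\le\eta/4$ staleness $+$ $<\eta/2$ tightness gap $+$ $<\eta/4$ further increments'' bookkeeping on the weighted labels, combined with the weighted Corollary~\ref{cor:subgraph}) -- then yields a weighted $\textsc{Threshold}(G,\rho^{\textsf{est}},\epsilon)$ with $\mathtt{insert}$ in $O(\alpha^{2})$, $\mathtt{delete}$ in $O(\alpha\log n)$, $O(1)$-time density/load queries, and $O(\beta+\log n)$-time subgraph extraction.

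Finally I would run Algorithm~\ref{alg:main} unchanged except that the scale index $i$ now ranges over $1\le i\le \log_{2}(nW)$, since after rescaling the maximum weighted load $\hat{\rho}_G$ lies in a range of width $O(nW)$; the estimates $\rho^{\textsf{est}}_i=2^{i-2}\alpha$ and slacks $\eta_i=2^{i-1}$ are exactly as in Section~\ref{subsec:overallalgo}. The notions of \emph{accurate}, \emph{affordable} and \emph{active}, and the $\mathtt{pending}_i$ mechanism, depend only on $\hat{\rho}_G$, $\eta_i$ and $\alpha$, so their correctness argument (an edge parked in $\mathtt{pending}_i$ because both endpoints sit at the load limit gets reinserted before $\mathcal{T}_i$ can ever be queried, because $\mathcal{T}_i$ becoming active requires a decrement at one of those endpoints, which strictly lowers its weighted load) transfers with no change. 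Summing costs -- each update duplicated $\alpha$ times, propagated through $O(\log(nW))$ copies at $O(\alpha^{2})$ per copy -- gives $O(\alpha^{3}\log(nW)) = O(\log^{4}(nW)\cdot\epsilon^{-6})$ worst-case time per insertion or deletion, $O(1)$ per query, and $O(\beta+\log n)$ per subgraph output; dividing the reported density by the rescaling factor recovers the answer for the original $G$. The step I expect to be the main obstacle is not conceptual but this audit of constants: checking that fractional label increments of size $1/\omega(v)$ never inflate the staleness bound or the tight-edge slack (which is precisely the role of the normalization $\omega(v)\ge 1$), and that the weight-augmented $\textsc{Labels}$ tree genuinely supports $\mathtt{maximal\_label\_set}$ in time proportional to the weight (equivalently, the size) of its output.
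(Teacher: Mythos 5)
Your proposal is correct and follows essentially the same route as the paper: establish the weighted local-to-global approximation (Theorem~\ref{thm:lp-approx-vertex-weighted}), modify $\textsc{LazyDirectedLabels}$ exactly as the three listed bullets (increments of $1/\omega(u)$, weight-augmented $\textsc{Labels}$ comparing $\omega(A)$ versus $\omega(B)$, no change to tightness thanks to $\omega(v)\ge 1$), reuse $\textsc{Threshold}$ unchanged, and run Algorithm~\ref{alg:main} over $O(\log(nW))$ density scales to get $O(\alpha^3\log(nW)) = O(\log^4(nW)\cdot\epsilon^{-6})$ per update. In fact your write-up is more detailed than the paper's own sketch (e.g.\ the explicit staleness and chain-length audits), so nothing further is needed.
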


\section{Directed Densest Subgraph} \label{sec:directed}

The directed version of the densest subgraph problem was introduced
by Kannan and Vinay \cite{KannanV99}.
In a directed graph $G = \langle V,E \rangle$, for a pair of sets $S,T \subseteq V$,
we denote using $E(S,T)$ the set of directed edges going from
a vertex in $S$ to a vertex in $T$.
The density of a pair of sets $S,T \subseteq V$
is defined as:
\[
\rho_G(S,T) \defeq \dfrac{|E(S,T)|}{\sqrt{|S||T|}}.
\]
The maximum subgraph density of $G$ is then defined as:
\[
\rho_G^* \defeq \max_{S,T \subseteq V} \rho_G(S,T).
\]
Note that we use the same notation for density for undirected and directed
graphs, as the distinction is clear from the graph in the subscript.

Charikar \cite{Charikar00} reduced directed DSP to $O(n^2)$
instances of solving an LP, and also observed that
only $O(\log n/\epsilon)$ suffice to extract a $(1-\epsilon)$
approximation.
Khuller and Saha \cite{KhullerS09} used the same reduction,
but further simplified the algorithm to 
$O(1)$ instances of a parametrized maximum flow problem.

In this section, we recount this reduction, but by visualizing
the problem reduced to as a densest subgraph problem on
vertex-weighted graphs, as defined in Section~\ref{sec:vertexweighted}.

\subsection{Reduction from Directed DSP to Vertex-weighted Undirected DSP}

Given a directed graph $G = \langle V, E \rangle$ and a parameter $t > 0$, we construct a vertex-weighted undirected graph
\[{G_t} = \langle {V_t},{E_t},{\omega _t}\rangle \]
where,
%
%
\begin{itemize}
	\item ${V_t} \defeq V_t^{(L)} \cup V_t^{(R)}$, in which $V_t^{(L)}$ and $V_t^{(R)}$ are both clones of the original vertex set $V$;
	\item ${E_t} \defeq \left\{ {(u,v) \mid u \in V_t^{(L)},v \in V_t^{(R)},(u, v) \in E} \right\}$ projects each original directed edge $(u, v) \in E$ into an undirected edge between $V_t^{(L)}$ and $V_t^{(R)}$, and
	\item 
	${\omega _t}(u) \defeq \begin{cases}
	1/2t & u \in V_t^{(L)}\\
	t/2 & u \in V_t^{(R)}
	\end{cases}$
	
\end{itemize}

To understand the intuition behind this reduction,
consider a pair of sets $S,T \subseteq V$.
Consider the set $S^{(L)}$ corresponding to $S$ in $V_t^{(L)}$,
and the set $T^{(R)}$ corresponding to $T$ in $V_t^{(R)}$.
$\rho_G(S,T) = \frac{|E(S,T)|}{\sqrt{|S||T|}}$,
whereas $\rho_{G_t}(S^{(L)} \cup T^{(R)}) = \frac{2|E(S,T)|}{(1/t)|S|+t|T|}$.
Picking $t$ carefully lets us relate the two notions,
leveraging the AM-GM inequality as indicated by the two denominators.
Lemmas~\ref{lem:lowerbnd} and~\ref{lem:exact} show this relation in detail.

%
%
%
%

\begin{lemma} \label{lem:lowerbnd}
For any directed graph $G = \langle V, E \rangle$,
let $G_t$ be defined as above. Then for any choice of parameter $t$,
\[
\rho _G^* \geq \rho _{{G_t}}^*.
\]
\end{lemma}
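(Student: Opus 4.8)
The plan is to show that \emph{every} vertex subset of $G_t$ has density at most $\rho_G^*$; specializing to the subset that attains $\rho_{G_t}^*$ then gives the lemma. So I would fix an arbitrary $U \subseteq V_t$ and decompose it along the two clones: let $S, T \subseteq V$ be the unique sets with $U = S^{(L)} \cup T^{(R)}$, where $S^{(L)} \subseteq V_t^{(L)}$ and $T^{(R)} \subseteq V_t^{(R)}$ are the copies of $S$ and $T$ in the left and right clones. Since every edge of $E_t$ runs between $V_t^{(L)}$ and $V_t^{(R)}$ and is the image of some directed edge of $G$, an $E_t$-edge has both endpoints in $U$ precisely when it corresponds to a directed edge $(u,v) \in E$ with $u \in S$ and $v \in T$; hence $|E_t(U)| = |E(S,T)|$. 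From the definition of $\omega_t$,
\[
\omega_t(U) = \frac{1}{2t}\,|S| + \frac{t}{2}\,|T| = \frac{1}{2}\left(\frac{|S|}{t} + t\,|T|\right).
\]

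Next I would invoke the AM--GM inequality on the denominator, $\dfrac{|S|}{t} + t\,|T| \ge 2\sqrt{|S|\,|T|}$, which is exactly what the choice of weights $1/(2t)$ and $t/2$ was designed to enable. Assuming for the moment that $S$ and $T$ are both nonempty, this gives
\[
\rho_{G_t}(U) = \frac{|E_t(U)|}{\omega_t(U)} = \frac{2\,|E(S,T)|}{|S|/t + t\,|T|} \le \frac{2\,|E(S,T)|}{2\sqrt{|S|\,|T|}} = \rho_G(S,T) \le \rho_G^*.
\]
Taking $U$ to be a maximizer of $\rho_{G_t}$ then yields $\rho_{G_t}^* \le \rho_G^*$, as desired.

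The only thing left is the degenerate case in which $S = \emptyset$ or $T = \emptyset$, where the quantity $\rho_G(S,T)$ is undefined: here, since $E_t$ has no edge with both endpoints inside a single clone, we get $|E_t(U)| = 0$, so $\rho_{G_t}(U) = 0 \le \rho_G^*$ (and $\rho_G^* \ge 0$ always, which also covers the empty graph). I do not expect a genuine obstacle in this proof — the argument is a one-line computation together with AM--GM; the only points requiring a little care are stating the edge correspondence $|E_t(U)| = |E(S,T)|$ cleanly and disposing of the empty-projection case, which is the ``hard'' (but essentially trivial) part.
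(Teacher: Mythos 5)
Your proof is correct and follows essentially the same route as the paper's: decompose the subgraph of $G_t$ into its left and right clones, note $|E_t(S^{(L)}\cup T^{(R)})|=|E(S,T)|$, and apply AM--GM to the weighted denominator to bound the density by $\rho_G(S,T)\le\rho_G^*$. The only (harmless) differences are that you argue over all subsets before specializing to the maximizer, and you explicitly dispose of the degenerate case $S=\emptyset$ or $T=\emptyset$, which the paper leaves implicit.
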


\begin{proof}
	
	Let $S^{(L)} \cup T^{(R)}$ denote the densest (vertex-weighted) subgraph in ${G_t}$, where $S^{(L)} \in V_t^{(L)}$ and $T^{(R)} \in V_t^{(R)}$.
	Let $S$ and $T$ denote the corresponding vertex sets in $V$.
	Then we have
	\begin{align*}
	|E_t(S^{(L)} \cup T^{(R)})| &= \rho _{{G_t}}^* \cdot (|S^{(L)}|/t + t|T^{(R)}|) /2 \\
	&\ge \rho _{{G_t}}^*\sqrt{|S^{(L)}|\cdot|T^{(R)}|},
	\end{align*}
	where the inequality follows from the AM-GM property.
	Using the facts $|E_t(S^{(L)} \cup T^{(R)})| = E(S,T)$,
	$|S^{(L)}| = |S|$, and $|T^{(R)}| = |T|$, we get that
	\[
	\dfrac{E(S,T)}{\sqrt{|S|\cdot|T|}} \geq \rho_{G_t}^*.
	\]
	Lastly, since the density of the pair of sets $S,T$ in the directed graph $G$ is at most $\rho^*_G$, we get that $\rho^*_G \geq \rho^*_{G_t}$.
\end{proof}

So, $G_t$ provides a ready lower bound for computing maximum subgraph density,
for any $t$. The next lemma shows that a careful choice of $t$
can give equality between the two optimums.

\begin{lemma} \label{lem:exact}
	For any directed graph $G = \langle V, E \rangle$ and a pair of subsets $S, T$ that provides the maximum subset density, i.e., $\rho _G^* = {{\rho }_G({S,T})}$, we have
	\[
	\rho _G^* = \rho _{{G_t}}^*,
	\]
	where $t = \sqrt{\frac{{| S |}}{{| T |}}}$.
\end{lemma}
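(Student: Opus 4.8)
The plan is to combine Lemma~\ref{lem:lowerbnd}, which already gives $\rho_G^* \ge \rho_{G_t}^*$ for every choice of $t$, with a matching lower bound $\rho_{G_t}^* \ge \rho_G^*$ that holds for the specific value $t = \sqrt{|S|/|T|}$. To obtain this lower bound, I would exhibit an explicit vertex subset of $G_t$ whose (weighted) density equals $\rho_G^*$, namely $S^{(L)} \cup T^{(R)}$, where $S^{(L)}$ and $T^{(R)}$ are the clones of $S$ and $T$ in $V_t^{(L)}$ and $V_t^{(R)}$ respectively.

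First I would compute the numerator: by construction of $E_t$, every directed edge of $G$ from $S$ to $T$ maps to an undirected edge between $S^{(L)}$ and $T^{(R)}$, and there are no other edges internal to $S^{(L)} \cup T^{(R)}$ (since $E_t$ only connects the left side to the right side, and these projected edges are exactly those $(u,v)\in E$ with $u$ on the left and $v$ on the right). Hence $|E_t(S^{(L)}\cup T^{(R)})| = |E(S,T)|$. Next I would compute the denominator $\omega_t(S^{(L)}\cup T^{(R)}) = \frac{1}{2t}|S| + \frac{t}{2}|T| = \tfrac12\bigl(|S|/t + t|T|\bigr)$, and observe that for $t = \sqrt{|S|/|T|}$ both terms equal $\sqrt{|S||T|}$, so $\omega_t(S^{(L)}\cup T^{(R)}) = \sqrt{|S||T|}$. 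Therefore $\rho_{G_t}(S^{(L)}\cup T^{(R)}) = |E(S,T)|/\sqrt{|S||T|} = \rho_G(S,T) = \rho_G^*$, which forces $\rho_{G_t}^* \ge \rho_G^*$.

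Combining this with Lemma~\ref{lem:lowerbnd} yields $\rho_G^* = \rho_{G_t}^*$, completing the proof. There is essentially no hard step here: the only point worth flagging is the observation that the AM--GM inequality used in Lemma~\ref{lem:lowerbnd} is tight precisely when $|S|/t = t|T|$, i.e.\ when $t = \sqrt{|S|/|T|}$, which is what makes this particular $t$ the ``right'' choice and turns the generic inequality of Lemma~\ref{lem:lowerbnd} into an equality. Everything else is a direct substitution into the definitions of $E_t$ and $\omega_t$.
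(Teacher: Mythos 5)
Your proposal is correct and follows essentially the same route as the paper: both exhibit the subset $S^{(L)} \cup T^{(R)}$ in $G_t$, compute its weighted density at $t = \sqrt{|S|/|T|}$ to get $\rho_{G_t}^* \geq |E(S,T)|/\sqrt{|S||T|} = \rho_G^*$, and combine this with Lemma~\ref{lem:lowerbnd} to conclude equality. Your explicit remark that this $t$ is exactly the point where the AM--GM step of Lemma~\ref{lem:lowerbnd} is tight is a nice (and accurate) clarification, but not a different argument.
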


\begin{proof}
	Now, consider the sets $S^{(L)} \in V_t^{(L)}$ and $T^{(R)} \in V_t^{(R)}$ corresponding to $S$ and $T$ respectively.
	The density of set $S \cup T$ can be at most $\rho _{{G_t}}^*$:
	\[
	\rho _{{G_t}}^* \ge \frac{{2| {E(S,T)} |}}{{| S |/t + | T | \cdot t}}.
	\]
	Substituting $t$ with $|S|/|T|$,
	\begin{align*}
	\rho _{{G_t}}^* & \geq \frac{2|E(S,T)|}{| S |\sqrt{\frac{| T |}{| S |}} + | T | \sqrt{\frac{| S |}{| T |}}}
	= \frac{{| E(S,T) |}}{{\sqrt {| S | \cdot | T |} }}
	= \hat \rho _G^*.
	\end{align*}
	Combining this with the bound from Lemma \ref{lem:lowerbnd} gives that $\rho _G^* = \rho _{{G_t}}^*$.
\end{proof}

Note, however, that this does not directly give an algorithm for directed densest subgraph, since we do not know the optimum value of $|S|/|T|$.
Since both $|S|$ and $|T|$ are integers between $0$ and $n$,
there can be at most $O(n^2)$ distinct values of $|S|/|T|$.
So, to find the exact solution, we can simply find $\rho_{G_t}^*$
for all possible $t$ values, and report the maximum.

This connection was first observed by Charikar \cite{Charikar00},
where he reduced the directed densest subgraph problem
to solving $O(n^2)$ linear programs.
However, our construction helps view these LPs as
DSP on vertex-weighted graphs, for which there are far more
optimized algorithms than solving generic LPs, in both static
and dynamic paradigms.
Charikar \cite{Charikar00} also observed that a $1+\epsilon$
approximate solution could be obtained by only checking
$O(\log n/\epsilon)$ values of $t$.
As one would expect, to obtain an approximate solution for the directed DSP,
it is not necessary to obtain an exact solution to the undirected vertex-weighted DSP.
As we show in Lemma~\ref{lem:simpleupperbound},
we only require $O(\log n/\epsilon)$ computations
of a $1+\epsilon/2$ approximation to the densest subgraph problem.

\begin{lemma} \label{lem:simpleupperbound}
	For any directed graph $G = \langle V, E \rangle$ and a pair of subsets $S, T$ that provides the maximum subset density, i.e., $\rho_G(S,T) = \rho _G^*$, we have
	\[ \rho _{{G_t}}^*  \geq (1-\epsilon)\rho _G^*,\]
	where $\sqrt{\frac{|S|}{|T|}}\cdot (1-\epsilon) \leq t \leq \sqrt{\frac{|S|}{|T|}}\cdot \frac{1}{(1-\epsilon)}$.
\end{lemma}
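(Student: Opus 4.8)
The plan is to reuse the ``candidate subgraph'' idea from Lemmas~\ref{lem:lowerbnd} and~\ref{lem:exact}: exhibit one explicit vertex subset of $G_t$ whose density already witnesses the claimed bound, and then control the denominator that appears. Let $S,T$ be the optimal directed pair, so that $\rho_G^* = \rho_G(S,T) = |E(S,T)|/\sqrt{|S||T|}$, and let $S^{(L)} \cup T^{(R)}$ be its image in $G_t$. Exactly as in the proof of Lemma~\ref{lem:exact}, this set has $G_t$-density $2|E(S,T)|/(|S|/t + t|T|)$, and since $\rho_{G_t}^*$ is the maximum density over all vertex subsets of $G_t$, we obtain $\rho_{G_t}^* \ge 2|E(S,T)|/(|S|/t + t|T|)$. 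It therefore suffices to show that, whenever $t$ lies in the prescribed window around $\sqrt{|S|/|T|}$, the denominator satisfies $|S|/t + t|T| \le \frac{2}{1-\epsilon}\sqrt{|S||T|}$.

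To do this I would substitute $t = c\sqrt{|S|/|T|}$, so that the admissibility condition on $t$ becomes $1-\epsilon \le c \le \frac{1}{1-\epsilon}$, and
\[
\frac{|S|}{t} + t|T| = \sqrt{|S||T|}\left(c + \frac{1}{c}\right).
\]
The function $c \mapsto c + 1/c$ is convex on $(0,\infty)$, is invariant under $c \mapsto 1/c$, and attains its minimum at $c=1$; hence on the interval $[1-\epsilon,\ \tfrac{1}{1-\epsilon}]$ it is maximized at an endpoint, and the two endpoints give the common value $(1-\epsilon) + \frac{1}{1-\epsilon}$. Since $(1-\epsilon)^2 \le 1$, we have $(1-\epsilon) + \frac{1}{1-\epsilon} \le \frac{2}{1-\epsilon}$, which is exactly the bound on $c + 1/c$ we need.

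Combining the two displays,
\[
\rho_{G_t}^* \ge \frac{2|E(S,T)|}{|S|/t + t|T|} \ge \frac{2|E(S,T)|}{\frac{2}{1-\epsilon}\sqrt{|S||T|}} = (1-\epsilon)\,\frac{|E(S,T)|}{\sqrt{|S||T|}} = (1-\epsilon)\rho_G^*,
\]
which is the claim. There is essentially no substantive obstacle here: the only point requiring a little care is the one-line convexity/symmetry argument that $c + 1/c$ over the admissible interval is maximized at the endpoints, together with the elementary inequality $(1-\epsilon)^2 \le 1$ that turns the endpoint value into $\le 2/(1-\epsilon)$; once those are in place, the rest is a direct substitution into the candidate-subgraph bound already used in the earlier lemmas.
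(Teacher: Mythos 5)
Your proposal is correct and follows essentially the same route as the paper: both lower-bound $\rho_{G_t}^*$ by the density of the candidate set $S^{(L)} \cup T^{(R)}$, giving $\rho_{G_t}^* \ge 2|E(S,T)|/(|S|/t + t|T|)$, and then use the admissible range of $t$ to bound the denominator by $\frac{2}{1-\epsilon}\sqrt{|S||T|}$ (the paper does this by bounding each term directly, while you phrase it via the endpoint maximum of $c + 1/c$, an equivalent elementary step).
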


\begin{proof}
	Consider the vertices $S^{(L)} \in V_t^{(L)}$ and $T^{(R)} \in V_t^{(R)}$
	corresponding to $S$ and $T$ respectively. The density of set $S^{(L)} \cup T^{(R)}$ can be at most $\rho _{{G_t}}^*$:
	\[
	\rho _{{G_t}}^* \ge \frac{{2| {E(S,T)} |}}{{| S |/t + | T | \cdot t}}.
	\]
	Substituting the bounds for $t$,
	\begin{align*}
	\rho _{{G_t}}^* &\ge \frac{2(1-\epsilon)|E(S,T)|}{|S|\sqrt{\frac{|T|}{|S|}} + | T |\sqrt{\frac{|S|}{|T|}}}
	= (1-\epsilon)\rho_G(S,T)
	= (1-\epsilon) \rho _G^*. \qedhere
	\end{align*}

\end{proof}

\subsection{Implications of the reduction}

The above reduction implies that finding a $(1-\epsilon)$-approximate solution 
to directed DSP can be reduced to $O(\log n/\epsilon)$ instances
of $(1-\epsilon/2)$-approximate vertex-weighted undirected DSP.

\begin{theorem}
	Given a directed graph $G$, with $m$ edges and $n$ vertices, and a $T(m,n,\epsilon)$ time algorithm for $(1-\epsilon)$-approximate vertex-weighted undirected densest subgraph,
	then there exists an $(1-\epsilon)$-approximate algorithm for finding the densest subgraph in $G$ in time $T(m,2n,\epsilon/2) \cdot O(\log n/\epsilon)$.
\end{theorem}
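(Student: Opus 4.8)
The plan is to invoke the reduction developed in this section: for each candidate parameter $t$ in a carefully chosen finite set, build the vertex-weighted undirected graph $G_t$ (which has $2n$ vertices and the same $m$ edges), run the assumed $(1-\epsilon/2)$-approximate vertex-weighted DSP algorithm on it, and return the best subgraph found, mapped back to a pair $(S,T)$ in $G$. Correctness rests entirely on Lemmas~\ref{lem:lowerbnd} and~\ref{lem:simpleupperbound}: the former guarantees every $\rho_{G_t}^\ast$ (hence every value the subroutine could certify) is at most $\rho_G^\ast$, so we never overestimate; the latter guarantees that as long as some tried $t$ lies within a $(1-\epsilon/2)$ multiplicative factor of $\sqrt{|S^\ast|/|T^\ast|}$, where $(S^\ast,T^\ast)$ is an optimal directed pair, the corresponding $\rho_{G_t}^\ast$ is at least $(1-\epsilon/2)\rho_G^\ast$. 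Composing with the $(1-\epsilon/2)$ guarantee of the subroutine yields a certified value at least $(1-\epsilon/2)^2\rho_G^\ast \ge (1-\epsilon)\rho_G^\ast$.

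The one thing that needs care is the choice of the grid of $t$-values. Since the optimal pair satisfies $1 \le |S^\ast|,|T^\ast| \le n$, the unknown ratio $\tau \defeq \sqrt{|S^\ast|/|T^\ast|}$ lies in $[n^{-1/2}, n^{1/2}]$. I would take the geometric sequence $t_j \defeq n^{-1/2}\,(1-\epsilon/2)^{-j}$ for $j = 0,1,\dots,J$, where $J$ is the least integer with $t_J \ge n^{1/2}$; this requires $(1-\epsilon/2)^{-J}\ge n$, i.e. $J = \lceil \log n / \log\tfrac{1}{1-\epsilon/2}\rceil = O(\log n/\epsilon)$, using $\log\tfrac{1}{1-\epsilon/2}\ge \epsilon/2$. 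Consecutive points of this grid differ by a factor $\tfrac{1}{1-\epsilon/2}$, so whatever value $\tau$ takes there is a $j$ with $t_j \le \tau \le t_{j+1}$, whence $\tau(1-\epsilon/2)\le t_j \le \tau \le \tau/(1-\epsilon/2)$ --- exactly the hypothesis of Lemma~\ref{lem:simpleupperbound} instantiated with error parameter $\epsilon/2$.

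For recovering an actual subgraph (not just a value), I would observe that if the vertex-weighted subroutine returns an induced subgraph $S^{(L)}\cup T^{(R)}$ of $G_t$ with $\rho_{G_t}(S^{(L)}\cup T^{(R)}) \ge (1-\epsilon/2)\rho_{G_t}^\ast$, then translating it to the pair $(S,T)$ in $G$ and applying AM--GM to the denominator, $(1/2t)|S| + (t/2)|T| \ge \sqrt{|S||T|}$, gives $\rho_G(S,T)\ge \rho_{G_t}(S^{(L)}\cup T^{(R)})$; so the directed density of the recovered pair is no worse than the vertex-weighted density certified by the subroutine, and for the good $t_j$ it is at least $(1-\epsilon)\rho_G^\ast$. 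The running time is $O(\log n/\epsilon)$ invocations of the subroutine on a graph with $2n$ vertices, $m$ edges, and error parameter $\epsilon/2$, plus $O(m+n)$ per iteration to build $G_t$ and map the answer back (absorbed into $T(m,2n,\epsilon/2)$, which is at least linear), giving the claimed $T(m,2n,\epsilon/2)\cdot O(\log n/\epsilon)$ bound.

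The main obstacle is really just the grid bookkeeping: ensuring the multiplicative net over $[n^{-1/2},n^{1/2}]$ has only $O(\log n/\epsilon)$ points while still being fine enough to satisfy the interval hypothesis of Lemma~\ref{lem:simpleupperbound}, and tracking that the two $(1-\epsilon/2)$ losses --- one from the grid, one from the subroutine --- multiply to at least $1-\epsilon$. Everything else is a direct application of the lemmas already proved in this section.
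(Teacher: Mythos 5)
Your proposal is correct and follows essentially the same route as the paper: a geometric grid of $O(\log n/\epsilon)$ values of $t$ over $[n^{-1/2},n^{1/2}]$ with ratio $1/(1-\epsilon/2)$, applying Lemma~\ref{lem:simpleupperbound} with error parameter $\epsilon/2$ to the bracketing grid point and Lemma~\ref{lem:lowerbnd} to rule out overestimates, then composing the two $(1-\epsilon/2)$ losses. Your added details --- the explicit $(1-\epsilon/2)^2\ge 1-\epsilon$ accounting and the AM--GM argument for mapping the returned subgraph back to a pair $(S,T)$ --- are correct refinements of what the paper leaves implicit.
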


\begin{proof}
	
	For each value of $t$ in
	\[
	\left[\frac{1}{\sqrt{n}}, \frac{1}{(1-\epsilon/2)\sqrt{n}}, \frac{1}{(1-\epsilon/2)^2\sqrt{n}}, \ldots, \sqrt{n} \right],
	\]
	we find an approximate value $\rho$ such that $\rho \geq (1-\epsilon/2)\rho_{G_t}^*$, and output the maximum such value.
	Using $\epsilon/2$ as the error parameter in Lemma~\ref{lem:simpleupperbound},
	we get that $\rho \geq (1-\epsilon)\rho_G^*$.
	
	The number of values of $t$ is $\log_{1/(1-\epsilon/2)} n = O(\log n/\epsilon)$.
\end{proof}

The current fastest algorithms for $(1-\epsilon)$-approximate static densest subgraph \cite{BahmaniGM14, BoobSW19}
rely on approximately solving $\textsc{Dual}(G)$, which is a positive linear program,
and subsequently extracting a primal solution.
Both these parts of the algorithm extend naturally to vertex-weighted graphs.
Substituting these runtimes in for $T(m,n,\epsilon)$, we get the following corollary.

\begin{corollary}\label{cor:staticdirected}
Let $G$ be a directed graph with $m$ edges and $n$ vertices, and let $\Delta$ be the maximum value among all its in-degrees and out-degrees.
Then, there exists an algorithm to find a $(1-\epsilon)$-approximate densest subgraph in $G$ in time $\widetilde{O}(m \epsilon^{-2} \cdot \min(\Delta, \epsilon^{-1}))$.
\end{corollary}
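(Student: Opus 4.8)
The plan is to invoke the reduction theorem just proved (directed DSP reduces to $O(\log n/\epsilon)$ calls of a $(1-\epsilon/2)$-approximate vertex-weighted undirected DSP algorithm) and then substitute into $T(m,n,\epsilon)$ the running time of a static vertex-weighted undirected DSP algorithm. The two candidate algorithms are the multiplicative-weights-based algorithm of Bahmani, Goel, Munagala \cite{BahmaniGM14}, which runs in $\widetilde{O}(m\epsilon^{-2})$ time, and the accelerated positive-LP solver of Boob, Sawlani, Wang \cite{BoobSW19}, which runs in $\widetilde{O}(m\Delta\epsilon^{-1})$ time. Since the corollary claims a bound of $\widetilde{O}(m\epsilon^{-2}\cdot\min(\Delta,\epsilon^{-1}))$, the intended reading is: take whichever of the two static algorithms is faster for the given $\Delta$ — use \cite{BoobSW19} when $\Delta \le \epsilon^{-1}$ (giving $\widetilde{O}(m\Delta\epsilon^{-1})$) and \cite{BahmaniGM14} when $\Delta > \epsilon^{-1}$ (giving $\widetilde{O}(m\epsilon^{-2})$); in both regimes the running time is $\widetilde{O}(m\epsilon^{-1}\cdot\min(\Delta,\epsilon^{-1}))$ per instance, and multiplying by the $O(\log n/\epsilon) = \widetilde{O}(\epsilon^{-1})$ instances from the reduction yields $\widetilde{O}(m\epsilon^{-2}\cdot\min(\Delta,\epsilon^{-1}))$.

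Concretely, first I would check that both static algorithms extend to vertex-weighted graphs with the same asymptotic guarantees. For \cite{BahmaniGM14}, the algorithm approximately solves $\textsc{Dual}(G)$ (a positive/mixed packing-covering LP) via multiplicative weights and then rounds to a primal solution; both steps are purely LP-level and the vertex weights only rescale the covering constraints, so the width bound and iteration count are unaffected up to $\log(nW)$ factors — and here $W$ is polynomially bounded because the reduction uses weights $1/(2t)$ and $t/2$ with $t\in[1/\sqrt n,\sqrt n]$, so $W = O(n)$ and $\log(nW) = O(\log n)$. The same remark applies to \cite{BoobSW19}. Second, I would note that the reduction doubles the vertex count ($2n$) and halves the error parameter ($\epsilon/2$), neither of which changes the $\widetilde{O}$ bound, and that $m$ is preserved (each directed edge maps to one undirected edge in $G_t$), and the maximum degree $\Delta$ in $G_t$ is exactly the max of in- and out-degrees in $G$.

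The only genuinely delicate point — and the one I'd want to state carefully — is which notion of $\Delta$ controls the \cite{BoobSW19} bound: their $\widetilde{O}(m\Delta\epsilon^{-1})$ is in terms of the maximum degree of the \emph{bipartite} edge-vertex incidence structure underlying $\textsc{Dual}(G_t)$, which for $G_t$ is governed by the max in/out-degree of $G$ (each $\textsc{Dual}$ edge-constraint touches $2$ vertices, and each vertex $v$ of $G_t$ appears in $\deg_{G_t}(v)$ such constraints). So setting $\Delta = \max(\max_v d^{\mathrm{in}}_G(v), \max_v d^{\mathrm{out}}_G(v))$ is the right definition, matching the corollary statement. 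Everything else is bookkeeping: plug $T(m,2n,\epsilon/2) = \widetilde{O}(m\epsilon^{-1}\min(\Delta,\epsilon^{-1}))$ into the reduction theorem and multiply by $O(\log n/\epsilon)$. I expect the main obstacle to be nothing deep — just justifying cleanly that both static LP-based algorithms transfer verbatim to the vertex-weighted setting, which the paragraph preceding the corollary already asserts ("Both these parts of the algorithm extend naturally to vertex-weighted graphs"), so the proof can be quite short.
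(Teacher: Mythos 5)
Your proposal is correct and follows essentially the same route as the paper: the corollary is obtained exactly by substituting the static runtimes of \cite{BahmaniGM14} and \cite{BoobSW19} (each extending to the vertex-weighted setting) for $T(m,2n,\epsilon/2)$ in the reduction theorem and multiplying by the $O(\log n/\epsilon)$ instances. Your additional bookkeeping — the per-regime choice giving $\widetilde{O}(m\epsilon^{-1}\min(\Delta,\epsilon^{-1}))$ per instance, the bound $W=O(n)$ from $t\in[1/\sqrt n,\sqrt n]$, and the identification of $\Delta$ in $G_t$ with the maximum in/out-degree of $G$ — is consistent with, and slightly more careful than, what the paper states.
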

Here, $\widetilde{O}$ hides polylogarithmic factors in $n$.

The same reduction also applies to fully dynamic algorithm for directed DSP.

\begin{theorem}
	Suppose there exists a fully dynamic $(1-\epsilon)$-approximation algorithm for undirected vertex-weighted DSP on an $n$-vertex graph with update time $U(n,\epsilon)$ and query time $Q(n,\epsilon)$.
	Then,
	there exists a deterministic fully dynamic $(1-\epsilon)$-approximation algorithm
	for directed DSP on an $n$-vertex graph using $U(2n,\epsilon/2) \cdot O(\log n /\epsilon)$ query time and $Q(2n,\epsilon/2) \cdot O(\log n /\epsilon)$ query time.
\end{theorem}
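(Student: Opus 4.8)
The plan is to run the given fully dynamic vertex-weighted structure in parallel over a geometric grid of parameters $t$, exactly mirroring the static reduction above, and to observe that a single directed-edge update reduces to a single undirected-edge update in each copy. Concretely, I maintain $O(\log n/\epsilon)$ copies of the fully dynamic $(1-\epsilon/2)$-approximate vertex-weighted DSP data structure, one copy $\mathcal{D}_t$ for each
\[
t \in \left[ \tfrac{1}{\sqrt{n}},\ \tfrac{1}{(1-\epsilon/2)\sqrt{n}},\ \tfrac{1}{(1-\epsilon/2)^2\sqrt{n}},\ \ldots,\ \sqrt{n} \right],
\]
a set of size $\log_{1/(1-\epsilon/2)} n = O(\log n/\epsilon)$. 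Copy $\mathcal{D}_t$ maintains the graph $G_t = \langle V_t, E_t, \omega_t\rangle$ of the reduction, which has $|V_t| = 2n$ vertices. The key point is that $E_t$ is \emph{independent} of $t$; only the weight function $\omega_t$ depends on $t$. So at all times every copy stores the same edge set, namely the image of the current directed edge set $E$ under $(u,v)\mapsto \{u^{(L)},v^{(R)}\}$, with $u^{(L)}\in V_t^{(L)}$ and $v^{(R)}\in V_t^{(R)}$.

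For the operations: on $\mathtt{insert}(u,v)$ in $G$ I insert the undirected edge $\{u^{(L)},v^{(R)}\}$ into every copy $\mathcal{D}_t$, and on $\mathtt{delete}(u,v)$ I delete it from every copy. Each directed update thus costs $U(2n,\epsilon/2)\cdot O(\log n/\epsilon)$. On a query I ask each $\mathcal{D}_t$ for a value $\rho^{(t)}$ with $(1-\epsilon/2)\rho^*_{G_t}\le \rho^{(t)}\le \rho^*_{G_t}$ and return $\max_t \rho^{(t)}$, at cost $Q(2n,\epsilon/2)\cdot O(\log n/\epsilon)$. If the approximate subgraph is also wanted, I query the maximizing copy and project its output $S^{(L)}\cup T^{(R)}$ back to the pair $(S,T)$ in $G$.

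Correctness splits into two bounds. For the upper bound, Lemma~\ref{lem:lowerbnd} gives $\rho^*_{G_t}\le \rho^*_G$ for every $t$, so $\max_t \rho^{(t)}\le \max_t \rho^*_{G_t}\le \rho^*_G$. For the lower bound, let $(S,T)$ attain $\rho_G(S,T)=\rho^*_G$; we may assume $\rho^*_G>0$, so $|S|,|T|\ge 1$ and hence $\sqrt{|S|/|T|}\in[1/\sqrt n,\sqrt n]$. Since consecutive grid points differ by a factor $1/(1-\epsilon/2)$, some grid value $t^*$ satisfies $\sqrt{|S|/|T|}\,(1-\epsilon/2)\le t^*\le \sqrt{|S|/|T|}/(1-\epsilon/2)$, and Lemma~\ref{lem:simpleupperbound} applied with error $\epsilon/2$ gives $\rho^*_{G_{t^*}}\ge (1-\epsilon/2)\rho^*_G$. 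Therefore $\max_t \rho^{(t)}\ge \rho^{(t^*)}\ge (1-\epsilon/2)\rho^*_{G_{t^*}}\ge (1-\epsilon/2)^2\rho^*_G\ge (1-\epsilon)\rho^*_G$. Determinism is inherited from the deterministic sub-routine together with the fixed, data-independent grid of $t$ values.

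I expect essentially no deep obstacle here: this is the dynamic analogue of the static reduction, so the only substantive checks are (i) that a directed-edge update translates to exactly one undirected-edge update in each copy, which holds precisely because $E_t$ does not depend on $t$, and (ii) that the two $(1-\epsilon/2)$ losses compose into $\ge 1-\epsilon$. It is also worth noting that each $G_t$ on this grid has weight spread $W=\max(t^2,t^{-2})\le n$, so instantiating the sub-routine with Theorem~\ref{thm:vertexweighted} keeps all the extra factors polylogarithmic; for the black-box statement this is already absorbed into $U(2n,\cdot)$ and $Q(2n,\cdot)$.
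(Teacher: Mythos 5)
Your proposal is correct and matches the paper's intended argument: the paper proves the static version of this reduction (grid of $O(\log n/\epsilon)$ values of $t$, Lemmas~\ref{lem:lowerbnd} and~\ref{lem:simpleupperbound} with error parameter $\epsilon/2$, take the maximum) and obtains the dynamic theorem by running one copy of the fully dynamic vertex-weighted structure per grid value, exactly as you do. Your added details — that $E_t$ is independent of $t$ so each directed update is one undirected update per copy, and that the two $(1-\epsilon/2)$ losses compose to $(1-\epsilon/2)^2 \ge 1-\epsilon$ — are just explicit versions of what the paper leaves implicit.
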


Substituting the runtimes from Theorem~\ref{thm:vertexweighted} in Section~\ref{sec:vertexweighted}, we get our result for dynamic DSP on directed graphs.

\directeddynamic*

\section*{Acknowledgements}
We thank Richard Peng and Gary Miller for their feedback and insightful discussions.

{\small
\bibliographystyle{alpha}
\bibliography{DSP_references}
}


\end{document}